\newtheorem{defi}{Definition}
\newtheorem{theo}{Theorem}
\newtheorem{tech}{Technical Point}
\newtheorem{examp}{Example}
\def\sq{\sqrt}
\def\sq2{\sqrt{2}}
\def\sq12{\sq{12}}
\def\dsq2{\f{1}{\sqrt{2}}}
\def\be{\begin{equation}}
\def\ee{\end{equation}}
\def\lra{\longrightarrow}
\def\la{\langle}
\def\ra{\rangle}
\def\q={=}
\def\n{{\mathcal N}}
\def\r{{\mathcal R}}
\def\q{{\mathcal Q}}
\newcommand{\f}[2]{\frac{\displaystyle #1}{\displaystyle #2}}
\newdimen\p@renwd
\font\tenex=cmex10 \setbox0=\hbox{\tenex B} \p@renwd=\wd0
\def\bbordermatrix#1{\begingroup \m@th
\setbox\z@\vbox{\def\\{\crcr\noalign{\kern2\p@\global\let\cr\endline}}%
    \ialign{$##$\hfil\kern2\p@\kern\p@renwd&\thinspace\hfil$##$\hfil
      &&\quad\hfil$##$\hfil\crcr
      \omit\strut\hfil\crcr\noalign{\kern-\baselineskip}%
      #1\crcr\omit\strut\cr}}%
  \setbox\tw@\vbox{\unvcopy\z@\global\setbox\@ne\lastbox}%
  \setbox\tw@\hbox{\unhbox\@ne\unskip\global\setbox\@ne\lastbox}%
  \setbox\tw@\hbox{$\kern\wd\@ne\kern-\p@renwd\left[\kern-\wd\@ne
    \global\setbox\@ne\vbox{\box\@ne\kern2\p@}%
    \vcenter{\kern-\ht\@ne\unvbox\z@\kern-\baselineskip}\,\right]$}%
  \null\;\vbox{\kern\ht\@ne\box\tw@}\endgroup}
\def\be{\begin{equation}}
\def\ee{\end{equation}}
\def\bt{\begin{tabbing}}
\def\et{\end{tabbing}}
\def\lra{\longrightarrow}
\def\eu{\exists!}
\def\TotCompFunc{\mathbbm{TotCompFunc}}
\def\CompFunc{\mathbbm{CompFunc}}
\def\Func{\mathbbm{Func}}
\def\TotCompString{\mathbbm{TotCompString}}
\def\CompString{\mathbbm{CompString}}
\def\TotTuring{\mathbbm{TotTuring}}
\def\Turing{\mathbbm{Turing}}
\def\NTotTuring{\mathbbm{NTotTuring}}
\def\BoolComp{\mathbbm{CompBool}}
\def\TotBoolComp{\mathbbm{TotCompBool}}
\def\Circuit{\mathbbm{Circuit}}
\def\CircuitFam{\mathbbm{CircuitFam}}
\def\TotCircuitFam{\mathbbm{TotCircuitFam}}
\def\Logic{\mathbbm{Logic}}
\def\PRCompN{\mathbbm{PRComp}\n}
\def\RegMachine{\mathbbm{RegMachine}}
\def\TotRegMachine{\mathbbm{TotRegMachine}}
\def\CompN{\mathbbm{Comp}\n}
\def\TotCompN{\mathbbm{TotComp}\n}
\def\Program{\mathbbm{Program}}
\def\InpTM{\mathbbm{InpTM}}
\def\Const{\mathbbm{Const}}
\def\Poly{\mathbbm{Poly}}
\def\Log{\mathbbm{Log}}
\def\Exp{\mathbbm{Exp}}
\def\NP{\mathbbm{NP}}
\def\NPComplete{\mathbbm{NPComplete}}
\def\PSPACE{\mathbbm{PSPACE}}
\def\NPSPACE{\mathbbm{NPSPACE}}
\def\Algorithm{\mathbbm{Algorithm}}
\def\Set{\mathbbm{Set}}
\def\PT{\mathbbm{P}}
\def\StoMat{\mathbbm{StoMat}}
\def\StoMat1{\mathbbm{StoMat1}}
\def\one{\mathbf{1}}
\newcounter{examnum}[section]
\newcounter{remarnum}[section]
\newenvironment{itemize*}%
  {\begin{itemize}%
    \setlength{\itemsep}{0pt}%
    \setlength{\parskip}{0pt}}%
  {\end{itemize}}
\begin{document}
\title{Theoretical Computer Science  for\\ the Working Category Theorist }
\author{Noson S. Yanofsky}\address{Department of Computer and Information Science,
Brooklyn College, The City University of New York, Brooklyn, N.Y. 11210.
And the Computer Science
Department of the Graduate Center, CUNY, New York, N.Y. 10016. } 
\eaddress{noson@sci.brooklyn.cuny.edu.} 
\keywords{computability theory, complexity theory, category theory, Kolmogorov complexity}

\maketitle
\begin{abstract}
\noindent Theoretical computer science discusses foundational issues about computations. It asks and answers questions such as ``What is a computation?'', ``What is computable?'', ``What is efficiently computable?'',``What is information?'',  ``What is random?'', ``What is an algorithm?'', etc.  We will present many of the major themes and theorems with the basic language of category theory. Surprisingly, many interesting theorems and concepts of theoretical computer science are easy consequences of functoriality and composition when you look at the right categories and functors connecting them.  
\end{abstract}

\section{Introduction}
From a broadly philosophical perspective, theoretical computer science is the study of the relationship between the syntax and the semantics of functions. By the syntax of a function we mean a description of the function such as a program that implements the function, a computer that ``runs''  the function, a logical formula that characterizes  the function, a circuit that executes the function, etc. By the semantics of a function we mean the rule that assigns to every input an output. There are many aspects to the relationship between the syntax and the semantics of a function. Computability theory asks what functions are defined by syntax, and --- more interestingly --- what functions are not defined by syntax. Complexity theory asks how can we classify and characterize functions by examining their semantics. Kolmogorov complexity deals with the syntax of functions that  only output single strings. Algorithms exist on the thin line between syntax and semantics of computable functions. They are at the core of computer science.     

In a categorical setting, the relationship between the syntax and semantics of functions is described by a functor from a category of syntax to a category of semantics. The functor takes a description of a function to the function it describes. Computability theory then asks what is in the image of this functor and --- more interestingly --- what is not in the image of the functor. Complexity theory tries to classify and characterize what is in the image of the functor by examining the preimage of the functor. Kolmogorov complexity theory does this for functions that output strings. We will classify some functions as compressible and some as random. Since algorithms are between syntax and semantics, the functor from syntax to semantics factors as 
\be Syntax \lra Algorithms \lra Semantics. \ee

This mini-course will flesh-out these ideas. The categories of syntax and semantics are given in in Figure \ref{fig:modelsofcomputation}. The central horizontal line is the core of the semantics of functions. This central line is surrounded by other, equivalent categories of semantics. Three different types of syntax are given on the outside of the spokes. It is essentially irrelevant which syntax is studied. We choose to concentrate on the top spoke of the diagram. 

Major parts of theoretical computer science will be omitted. For example, we will not cover any formal language theory, semantics, and analysis of algorithms. We do not have the space to cover all the subtopics of theoretical computer science and only deal with the central issues in theoretical computer science. 

I am thankful to Gershom Bazerman, Deric Kwok, Florian Lengyel, Armando Matos, Rohit Parikh, and all the members of The New York City Category Theory Seminar for helpful discussions and editing.  

\section{Models of Computation}
The first question we deal with is ``What is a computation?'' We all have a pretty good intuition that a computation is a process that a computer performs. Computer scientists who study this area have given other, more formal, definitions of a computation. They have described different models where computations occur. These models are virtual computers that are exact and have a few simple rules. 

We have united all the different models that we will deal with in Figure \ref{fig:modelsofcomputation} which we call ``The Big Picture.'' This diagram has a center and three spokes coming out of it. This first part of this mini-course will go through the entire diagram. The rest of the mini-course will concentrate on the top spoke. 

Let us give some orientation around The Big Picture so that it is less intimidating. All the categories are symmetric monoidal categories. All the functors are symmetric monoidal functors and all the equivalences use symmetrical monoidal natural transformations. 
All horizontal lines are inclusion functors. Almost every category comes in two forms: all the possible morphisms and the subcategory of total morphisms. The diagram has a central line that consists of different types of functions that our models try to mimic. There are three spokes coming out of that line. These correspond to three types of models of computation: (i) the top spoke corresponds to models that manipulate strings; (ii) the lower right spoke corresponds to models that manipulate natural numbers; and (iii) the lower left spoke corresponds to models that manipulate bits. 

In all these categories, the composition corresponds to sequential processing (that is, performing one process after another). The monoidal structure corresponds to parallel processing. The symmetric monoidal structure corresponds to the fact that the order of parallel processing is easily interchanged. 

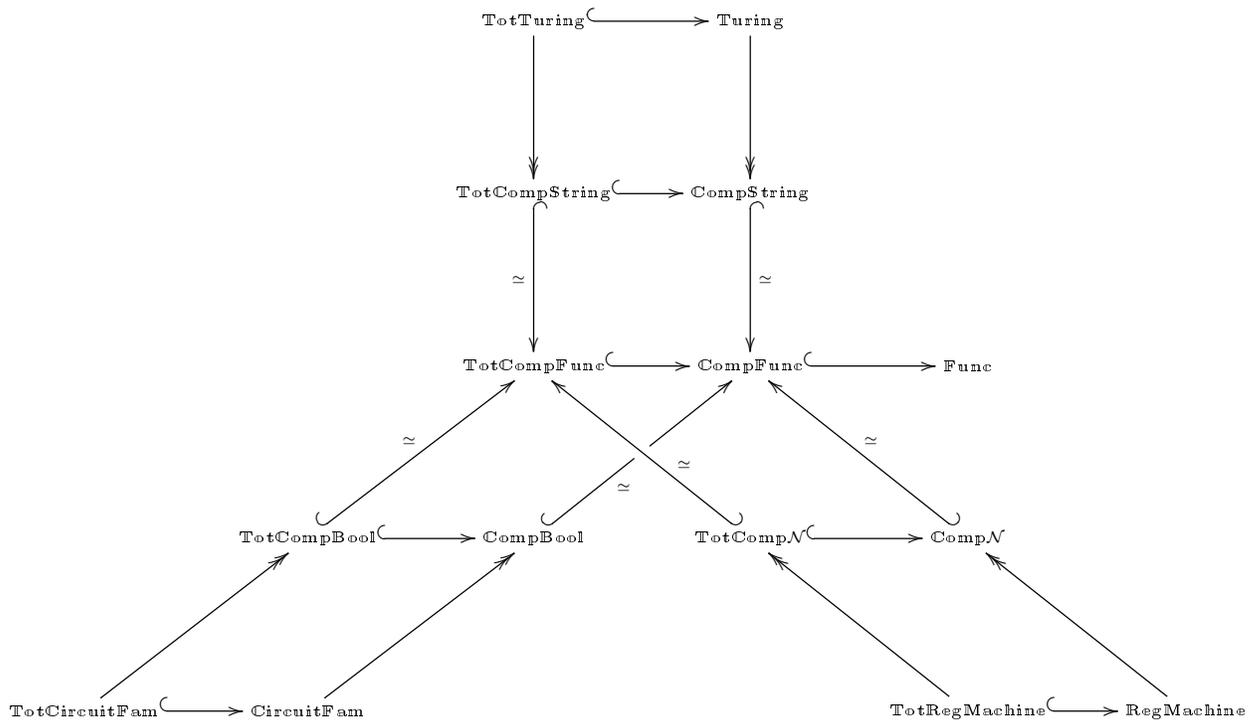
\begin{figure}

\begin{tiny}\xymatrix{
&&\TotTuring\ar@{->>}[dd] \ar@{^{(}->}[r]& \Turing\ar@{->>}[dd]\\
\\
&&\TotCompString\ar@{^{(}->}[dd]_\simeq  \ar@{^{(}->}[r]& \CompString\ar@{^{(}->}[dd]^\simeq\\
\\
&&\TotCompFunc  \ar@{^{(}->}[r]& \CompFunc \ar@{^{(}->}[r]&\Func\\
\\
&\TotBoolComp\ar@{^{(}->}[uur]^\simeq  \ar@{^{(}->}[r]&\BoolComp\ar@{^{(}->}[uur]|\hole_<(.3)\simeq &\TotCompN\ar@{_{(}->}[uul]_<(.3)\simeq  \ar@{^{(}->}[r]&\CompN\ar@{_{(}->}[uul]_\simeq \\
\\
\TotCircuitFam\ar@{->>}[ruu] \ar@{^{(}->}[r]&\CircuitFam\ar@{->>}[ruu]&&&\TotRegMachine\ar@{->>}[uul] \ar@{^{(}->}[r]&\RegMachine\ar@{->>}[uul]
}
\end{tiny}
\caption{``The Big Picture'' of models of computation}\label{fig:modelsofcomputation}
\end{figure}

The central focus of this mini-course  is the middle line of The Big Picture.
\begin{defi} The category $\Func$ consists of all functions from sequences of types to sequences of types. The objects are sequences of types such as $Nat \times String \times Nat \times  Bool$ or $Float \times Char \times Nat \times Integer$. We will denote a sequence of types as $Seq$. The morphisms in $\Func$ from $Seq$ to $Seq'$ are all functions that have inputs from type $Seq$ and outputs of type $Seq'.$ We permit all types of functions including partial functions and functions that computers cannot mimic. The identity functions are obvious. Composition in the category is simply function composition. The monoidal structure on objects is concatenation of sequences of types. Given $f \colon Seq_1 \lra Seq_2$ and $g\colon Seq_3 \lra Seq_4$, their tensor product is $(f \otimes g) \colon (Seq_1 \times Seq_3) \lra (Seq_2 \times Seq_4)$ which corresponds to performing both functions in parallel. The symmetric monoidal structure comes from the trivial function that swaps sequences of types, i.e., $tw \colon Seq \times Seq' \lra Seq' \times Seq$. We leave the details for the reader.  

The category $\CompFunc$  is a subcategory of $\Func$ which has the same objects. The morphisms of this subcategory are functions that a computer can mimic. Partial functions are permitted in this subcategory. A computer can mimic a partial function if for any input for which there is an output, the computer will give that output, and if there is no output, the computer will not output anything or go into an infinite loop. 

There is a further subcategory $\TotCompFunc$ which contains all the total computable functions. These are functions that for every input there is an output and a computer can mimic the function.  There are obvious inclusion functors 
\be \xymatrix{\TotCompFunc  \ar@{^{(}->}[r]& \CompFunc \ar@{^{(}->}[r]&\Func .\\}\label{diag:inclusionsoffunc}\ee
which are the identity on objects.
\end{defi}

In the definition we saw the phrase ``functions that a computer can mimic.'' The obvious question is what type of computer are we discussing? What computer process is legitimate? This part of the mini-course will give several answers to that question.

\subsection{Manipulating Strings: Top Spoke.}

Let us go up through the top spoke of The Big Picture.
\begin{defi}The category $\CompString$ is a subcategory of $\CompFunc$. The objects are sequences of only $String$ types. We do not permit any other types.  The objects are $String^0=\ast$ (which is the terminal type), $String^1=String$, $String^2=String \times String$, $String^3=String \times String \times String$, $\dots$.  The morphisms of this category 
are computable functions between sequences of $String$ types. There are partial functions in this category. The symmetric monoidal structure is similar to $\Func$.

The subcategory $\TotCompString$ has the same objects as $\CompString$ but with only total computable string functions. There is an obvious inclusion functor $\TotCompString \longhookrightarrow \CompString$. 
\end{defi}

There is an inclusion functor $\CompString \longhookrightarrow \CompFunc$  that takes $Strings^n$ in $\CompString$ to the same object in $\CompFunc$. This functor is more than an inclusion functor. 
\begin{theo} \label{theo:encodingtypes}
The inclusion functor $Inc\colon \CompString \longhookrightarrow \CompFunc$  is an equivalence of categories.  
\end{theo}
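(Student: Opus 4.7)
My plan is to verify the two standard conditions for an equivalence of categories: that $Inc$ is fully faithful and essentially surjective. Since $Inc$ is the identity on the objects $String^n$, and for any two such objects the morphism sets $\CompString(String^n, String^m)$ and $\CompFunc(String^n, String^m)$ are defined identically (both are the partial computable functions of the appropriate arity), full faithfulness is immediate from the definitions. All of the real content therefore lies in essential surjectivity.

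For essential surjectivity I would show that for each primitive type $T \in \{Nat, Bool, Char, Integer, Float, \ldots\}$ appearing in the definition of $\Func$, there exist a natural number $n_T$ and a computable bijection $e_T \colon T \to String^{n_T}$ (with computable inverse), so that $T \cong String^{n_T}$ in $\CompFunc$. In every realistic case $n_T = 1$ suffices: the elements of each type are by assumption already represented on a computer as finite strings, and the encoding/decoding routines are themselves computable and mutually inverse. Then given an arbitrary object $Seq = T_1 \times T_2 \times \cdots \times T_k$ of $\CompFunc$, the tensor product $e_{T_1} \otimes e_{T_2} \otimes \cdots \otimes e_{T_k}$ gives a computable isomorphism $Seq \cong String^{n_{T_1}} \times \cdots \times String^{n_{T_k}} = String^N$ for $N = \sum_i n_{T_i}$, exhibiting $Seq$ as isomorphic to an object in the image of $Inc$. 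The inverse is the obvious parsing function, which is also computable.

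As a minor, optional refinement, one can further collapse any $String^N$ with $N \geq 1$ to $String$ itself by using a computable pairing function on strings (for example, prefix-length encoding), giving isomorphisms $String^N \cong String$; this is not needed for the theorem but shows that, up to isomorphism, all nonterminal objects of $\CompFunc$ are already captured by a single generator. One should also check that the chosen isomorphisms are natural with respect to the symmetric monoidal structure so that $Inc$ lifts to an equivalence of symmetric monoidal categories, as promised in the surrounding prose.

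The main obstacle, and really the only one, is the informal word \emph{type}: the theorem silently assumes that every type admitted as a factor of an object of $\Func$ comes equipped with a fixed computable bijective encoding into strings. If one takes this as part of the working definition of ``type used in computer science'' (which is the implicit convention in the paper), the argument above is routine; if one wished to allow uncountable types such as arbitrary real numbers, the statement would fail and the definition of $\Func$ would need to be restricted to types with computable string encodings. I would therefore preface the proof with a short remark making this convention explicit, and then proceed as described.
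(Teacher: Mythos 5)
Your proposal is correct and follows essentially the same route as the paper: full faithfulness is immediate because $\CompString$ is a full subcategory, and essential surjectivity comes from choosing a computable, computably-invertible encoding of each sequence of types into some $String^n$ (the paper phrases this as ``every programmer knows how to encode one data type as another'' and packages it as an explicit quasi-inverse functor $F$ with $F\circ Inc = Id$ and $Inc\circ F\cong Id$, while you verify the fully-faithful-plus-essentially-surjective criterion directly). Your explicit remark that the argument tacitly assumes every admissible type carries a computable bijective string encoding is a useful sharpening of a point the paper leaves implicit, but it does not change the substance of the argument.
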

\begin{proof}
Every computable string function in $\CompString$ goes to a unique computable function in $\CompFunc$ so the inclusion functor is full and faithful. What remains to be shown is that the inclusion functor is essentially surjective. That means, given any sequence of types in $\CompFunc$, say $Seq$ there is some $n$ and a computable isomorphism $enc\colon Seq \lra String^n$ that encodes the data of type $Seq$ into data of type $String^n$. Every programmer knows how to encode one data type as another. This encoding is an isomorphism because we need to be able to uniquely decode the encoding. 

We are really describing a  functor $F\colon \CompFunc \lra \CompString$. The types in $\CompFunc$ are encoded as a sequence of strings and the morphisms are encoded as functions between sequences of strings. $F\circ Inc= Id_\CompString$ because products of strings are encoded as themselves and $Inc \circ F \cong Id_\CompFunc$. It is important to point out that there is nothing universal about any encoding. There might be many such encodings. However they are all isomorphic to each other.
\end{proof}

There is a similar equivalence $\TotCompString \longhookrightarrow \TotCompFunc$. 

\vspace{.5in}

Let us continue up the top spoke of The Big Picture. In the 1930's, Alan Turing wondered about the formal definition of a computation. He came up with a model we now call a {\bf Turing machine} which manipulate strings. Turing based his work on the analogy that mathematicians do computation. They manipulate the symbols of mathematics in different ways when they are in different states. For example, if a mathematician sees the statement $x\times (y + z)$ and is in the distributive state, she will then cross out that statement and write $(x \times y) + (x \times z)$. In the same way, a Turing machine has a finite set of states that describe what actions the machine should perform. Just as a mathematician writes his calculation on a piece of paper, so too, a Turing machine performs its calculations on paper.
Turing was inspired by ticker tape machines and typewriter ribbons to define his paper as a thin tape that can only have one character per space at a time. The machine will have several tapes that are used for input, a tape for working out calculations, and several tapes for output. For every tape, there will be an arm of the Turing machine that will be able to see what is on the tape, change one symbol for another, and move to the right or the left of that symbol. A typical rule of a Turing machine will say something like ``If the machine is in state $q_{32}$ and it sees symbol $x_1$ on the first tape and $x_2$ on the second tape, ... and the symbol $x_n$ in the $n$th tape, then change to state $q_{51}$, make the symbol in the first tape to a $y_1$ and the symbol in the second tape to $y_2$,... the symbol in the $n$th tape into a $y_n$, also move to the left in the first tape, the right in the second tape, ... , the right in the $n$th tape.'' 
In symbols we can write this as 
 \be \label{diag:TurDef} \delta(q_{32}, x_1,x_2, \ldots,x_n)  = (q_{51}, y_1,y_2, \ldots,y_n, L, R, R, L, \ldots, R).\ee      

There is an obvious question that we left out. How long should the paper be? Turing realized that if you limit the size of the paper, then you will only be able mimic certain less complicated functions. Since Turing was only interested in whether or not a function was computable, and not whether or not it it was computable with a certain sized paper, Turing insisted that the paper be infinitely long. There was no bound on how much calculation can be done. It was only thirty years later that theoretical computer scientists started being concerned with how much space resources are needed to compute certain functions. (We will see more of this in the complexity theory section of this mini-course.)

The type of Turing machines we will deal with will have several input tapes and several output tapes and another work tape which will contain all the calculations. We might envision the Turing machine as Figure \ref{pic:turingmachine}.

\begin{figure}[h!]
\centering
 \includegraphics[width=17cm, height=15cm]{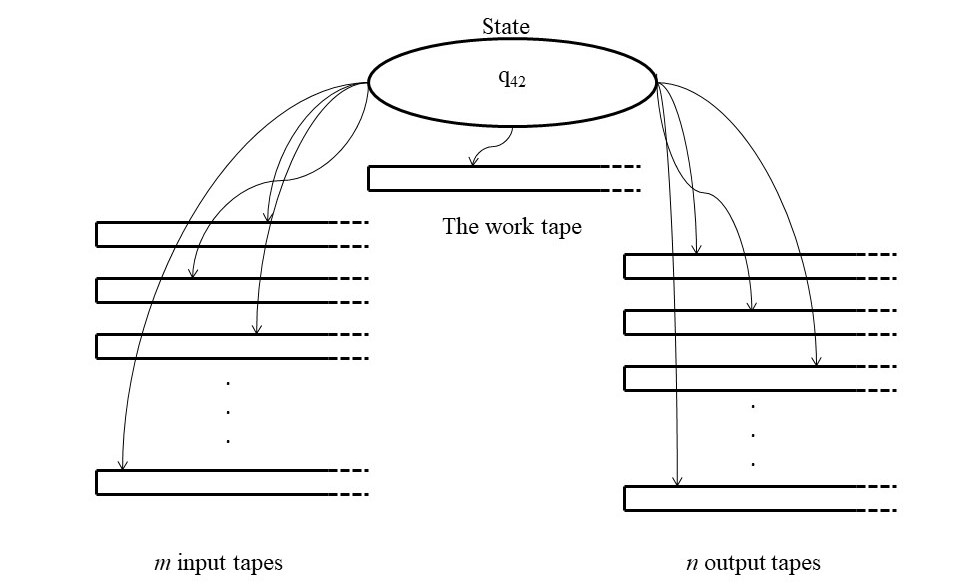}
  \caption{A Turing Machine} \label{pic:turingmachine}
\end{figure}

A computation occurs when data is put on the input tapes and the Turing machine is in a special starting state. The  machine then follows the rules on how to manipulate the input strings, compute on the work tape, and write on the output tape. It will go through many states manipulating the strings. There are two possible outcomes that can happen with this process: (i) the Turing machine can come to a certain state with symbols on its tapes for which there is no further rule. The machine then halts. Or (ii) the Turing machine continues forever in an infinite loop.  

Let us put all these machines in one category.
\begin{defi}
The category $\Turing$ consists of all the Turing machines. The objects are the natural numbers. The set $Hom_\Turing(m,n)$ consists of all Turing machines with $m$ input tapes and $n$ output tapes. The Turing machines compose in the obvious way. If $T\colon m \lra n$ and $T'\colon n \lra p$ are Turing machines, then $T'\circ T \colon m \lra n \lra p$ will be a Turing machine. The output tapes of $T$ become the input tapes of $T'$. After the $T$ machine halts, the machine will go to the start state of the $T'$ machine. If the $T$ machine does not halt, the $T'$ machine never even begins. The monoidal structure on the objects is addition of natural numbers. The monoidal structure for morphisms $T\colon m \lra n$ and $T'\colon m' \lra n'$ is $T \otimes T' \colon m+m' \lra n+n'$. The Turing machine $T \otimes T'$ has to be defined by doing both processes at one time. The set of states is the product of the two sets of states. At each point of time, this Turing machine does what both machines would do. 

There are many Turing machines that for certain input do not stop but go into an infinite loop. Others halt on all inputs. The subcategory $\TotTuring$ consists of the Turing machines that halt on every input. There is an obvious inclusion functor $\TotTuring \longhookrightarrow \Turing$. 
\end{defi}

There is one problem: I lied to you.
\begin{tech}\label{tech:TuringNotCat}
 $\Turing$ is not really a category. While it is a directed graph with a well defined associative composition, and there is a identity Turing machine $Id_n\colon n \lra n$ that takes all the data on the input tapes to the output tapes, there is a problem with composition with this identity. The composition of any Turing machine with such an identity Turing machine produces the correct function, but it is not the same Turing machine as the original Turing machine, i.e. $T\circ Id_n \not = T$.  
There are different ways of dealing with this problem: (i) We can be very careful in defining the composition of Turing machines. (ii) We can talk about equivalence classes of Turing machines and in this case $T\circ Id_n \sim T$. Or (iii) we can begin to define new structures called ``almost-categories'' with symmetric monoidal structures. 

In \cite{mydefalg} this problem is taken very seriously and method (ii) is used to deal with it. Different equivalence relations are discussed in \cite{mygalois}. However, for the purposes needed here, we will call it a category, but be aware of the problem. We will be careful with what we say.  
\end{tech}

Every Turing machine describes a function. By looking at all the input and its outputs, we are defining a function. This is actually a functor $\Turing \lra \CompString$. This functor  will take object $m$ to $Strings^m$ and a Turing machine with $m$ inputs and $n$ outputs will go to a function from $String^m$ to $String^n$.  

It is believed that one can go the other way. Given any computable function, we can find a Turing machine that computes it. This is the content of the  
{\bf Church-Turing thesis} which says that any computable function can be mimicked by a Turing machine. In our categorical language this means that the functor $\Turing \lra \CompString$ is full. This is called a ``thesis'' rather than a ``theorem'' because it has not been proven and probably cannot be proven. The reason for the hardship is that there is no perfect definition of what it means to be computable function. How can we prove that every computable function can be mimicked by a Turing machine when we cannot give an exact characterization of what we mean by computable function? In fact, some people define a computable function as a function which can be mimicked by a Turing machine. If we take that as a definition, then the Church-Turing thesis is true but has absolutely no content. Be that as it may, most people take the Church-Turing thesis to be true. Turing machines have been around since the 1930's and no one has found a computable function that a Turing machine cannot mimic. Another way to see this is to realize that if a Turing machine {\it cannot} mimic some function then the function is not computable and no computer can mimic it. In the next part of this mini-course we will describe functions that cannot be mimicked by a Turing machine and hence cannot be mimicked by any computer.  

There is an intimate relationship between computation and logic. We shall describe this relationship with a (symmetric monoidal) functor from the (symmetric monoidal) category of Turing machines to a (symmetric monoidal) category of logical formulas. That is, we will formulate a category $\Logic$ and describe a functor 
\be L \colon \Turing \lra \Logic.\ee
There are many ways of describing the collection of logical formulas. We will describe $\Logic$ so that it fits nicely with the category of Turing machines. 

Let us first see what we need from the functor $L$. Logical formulas that describe Turing machines will need three types of variables:
\begin{itemize}
\item There are variables to describe the contents of the tapes. A typical variable will be $C^z(t, i,j,k)$ where $z \in \{i, w, o \}$ corresponding to input tape, work tape and output tape. $C^z(t, i,j,k)$ is true iff at time $t$, on the $i$th $z$ tape, the $j$th position contains symbol $k$. $t$ can be any non-negative integer, $i\in\{1, 2, \ldots , m\}$ where $m$ is the number of tapes of type $z$ (Since there is only one work tape, if $z=w$ then $i=1$). $j$ is any positive integer. $k\in \{ 1, 2 ,\ldots , |\Sigma|\}$ where $\Sigma$ is the alphabet of the Turing machine.  
\item There are variables to describe the position of the pointers to the tapes. A typical variable will be $P^z(t, i,j)$ where $z \in \{i, w, o \}$ corresponding to input tape, work tape and output tape. $P^z(t, i,j)$ is true iff at time $t$, on the $i$th $z$ tape, the pointer is pointing to the $j$th position. The number of variables is similar to $C^z$.
\item There are variables to describe what state the Turing machine is in. $Q(t,q)$ is true iff at time $t$ the Turing machine is in state $q$. $q \in \{1,2,\ldots,  |Q| \}$ where $Q$ is the set of states of the Turing machine.  
\end{itemize}

Now that we have variables, let us deal with the formulas. We will not go into all the details of all the formulas, but we will give a sampling of the types of formulas. 
\begin{itemize}
\item For every tape, at every time click, at each position (up to a point), there exists something in contents of the tape:
\be C^z(t,i,j,1) \lor C^z(t,i,j,2) \lor  \cdots \lor C^z(t,i,j,|\Sigma |) \ee
for appropriate $z, t, i$ and $j$. There are similar formula for $P^z$ and $Q$.   
\item For every tape, at every time click, the Turing machine is not pointing to more than one position on the tape:
\be [j\neq j'] \lra [P^z(t, i,j) = \neg P^z(t, i,j')]\ee
For appropriate $z,t$ and $i$. There are similar formulas for $C^z$ and $Q$.
\item At time $t=0$ the Turing machine is in state 1 and all the pointers are pointing to position 1:
\be Q(0,1) \land P^z(0, i, 1) \ee
for the appropriate $z$ and $i$.
\end{itemize}

The most important logical formulas will correspond to the instructions --- or program ---  of the Turing machine. Let us start formalizing Equation \ref{diag:TurDef}. To make it easier, we will just assume $n$ tapes of type $z$.
\be [Q(t,32)\ee
\be \land P^z(t,1,j_1)\land P^z(2,2,j_2)\land\cdots \land P^z(t,n,j_n)\ee
\be \land C^z(t,1,j_1, x_1)\land C^z(t,2,j_2,x_2))\land \cdots \land C^z(t,n,j_n,x_n)]\ee
\be \qquad \lra \qquad \ee
\be [Q(t+1,51)\ee
\be \land C^z(t+1,1,j_1, y_1)\land C^zi(t+1,2,j_2,y_2)\land \cdots \land C^z(t+1,n,j_n,y_n)\ee
\be \land P^z(t+1,1,j_1-1)\land P^z(t+1,2,j_2+1)\land \cdots \land P^z(t+1,n,j_n+1)]\ee

The first line tells you the state. The second line tells the position of all the pointers. The third line tells the content. If the first three lines are true, then the implication on line four tells that the new state ( at time $t+1$ ) should be on line five. The new content is on line six. The new position is on line seven. We have to do this for each rule. 
We can go further with the details but we are going to follow the motto that ``just because something could be formalized does not mean it should be formalized!''

For any Turing machine $T$, we will have many different logical formulas described above. We will call the conjunction of all these logical formulas $L(T)$. This large logical formula describes the potential computation of $T$. There is, however something missing: input. Let $x$ stand for the input, then we can conjunction $L(T)$ with logical formulas that say that at time $t=0$ input $x$ will be found on the input tapes. We denote this larger logical formulas as $L(T)[x]$. Depending on the input, $L(T)[x]$ might be satisfiable and it might not be. This will correspond to whether or not the computation will halt or not.   

Now that we understand what we want from the functor $L \colon \Turing \lra \Logic$, let us describe the category $\Logic$.
\begin{defi} 
The objects of $\Logic$ will be --- just like $\Turing$ --- the natural numbers. The morphisms from $m$ to $n$ will be logical formulas that use $m$ variables of the form $C^i$ and $P^i$ and n variables of the form $C^o$ and $P^o$. The composition of the logical formula $f\colon m \lra n$ with logical formula $f'\colon n \lra p$ will be the logical formula 
\be f \land [C^o(1)=C'^i(1)]\land  [P^o(1)=P'^i(1)]\land [C^o(2)=C'^i(2)]\land  [P^o(2)=P'^i(2)]\land \ee 
\be \cdots \land [C^o(n)=C'^i(n)]\land  [P^o(n)=P'^i(n)]\land f' \ee
i.e., the conjunction of the formulas and the setting of output variables of formula to have the same values as the input variables of the second formula. The monoidal structure is the addition of the natural numbers and the appropriate conjunction of formulas for the morphismism. There is also a symetric monoidal structure to $\Logic$. The functor $L$ is then a symmetric monoidal functor. 

We freely admit that definition of $\Logic$ is {\it ad hoc}. There are many ways to classify logical formulas. We choose this one because it fits nicely with the symmetric monoidal structure of $\Turing$. 
\end{defi}

\vspace{.5in}

We will now discuss the bottom two spokes of The Big Picture. We note that the rest of this mini-course will be developed with the categories and functors from the top spoke alone. The categories in the bottom two spokes will barely be mentioned again. The ideas and theorems of theoretical computer science could be said using the language of any of the spokes. We choose Turing machines for historical reasons.  If you are not interested in other models of computation, you can skip them. We include them only because they are discussed in many textbooks on theoretical computer science.

\subsection{Manipulating Natural Numbers: Lower Right Spoke}

While Turing thought of a computation as manipulating strings, others such as Joachim Lambek, Marvin Minsky, John C. Shepherdson, and Hao Wang thought of a computation as something to do with manipulating natural numbers. They dealt with functions whose input and output were sequences of natural numbers.

This brings us to define the following categories.
\begin{defi}
The objects of the category $\CompN$ are types $Nat^0= \ast$, $Nat^1=Nat$, $Nat^2=Nat \times Nat$, $Nat^3=Nat \times Nat \times Nat$, $\ldots$. The morphisms are all computable functions from powers of natural numbers to powers of natural numbers (including partial functions). The symmetric monoidal category structure is the same as that of $\Func$. 

There is a subcategory $\TotCompN$ with the same objects but contains only total computable functions. There is an obvious inclusion $ \TotCompN \longhookrightarrow \CompN$.
\end{defi}

There is an inclusion $Inc \colon \CompN \longhookrightarrow \CompFunc$ that takes $Nat^m$ to $Nat^m$. Just like we can encode any sequence of types as strings, so too, we can encode any sequence of types as natural numbers. From this we get the analogy of 
Theorem \ref{theo:encodingtypes} that says that this inclusion is an equivalence of categories. Similarly, the inclusion 
$Inc \colon \TotCompN \longhookrightarrow \TotCompFunc$ is also an equivalence of categories. 

Just like a Turing machine is a method to manipulate strings, a {\bf register machine} is a method of manipulating natural numbers. 
What is a register machine? They are basically programs in a very simple programing language. These programs have three different types of variables. There are $X_1, X_2, X_3, \ldots$ which are called input variables,
$Y_1, Y_2, Y_3, \ldots$ which are called output variables and $W_1, W_2, W_3, \ldots$ which are called work variables. In a register machine one is permitted the following type of operations on any variable $Z$:
\begin{itemize}
\item $Z=0$
\item $Z=Z+1$
\item If $Z = 0$ goto $L$
\end{itemize}
where $L$ is a label for some line number. A program is a list of such statements for several variables. The register machine usually starts with the input variables initialized to the inputs. The machine then follows the program. Each variable is to be thought of as a computer register that holds a natural number. The values in the output variables at the end of an execution are the output of the function. There exists certain register machines for which some of the input causes the machine to go into an infinite loop and have no output values. Other register machines halt for any input.   

Let us put all the register machines into a category.
\begin{defi}
The objects of the category $\RegMachine$ are the natural numbers. The morphisms from $m$ to $n$ are all register machines with $m$ input variables and $n$ output variables. There are identity register machines that do nothing but take $m$ inputs and put them into $m$ outputs without changing the values. Composition is not hard to define. Basically one program is tagged onto the end of another program. Output variables of the first program must be set equal to input variables of the second program. Labels and variable names must be changed so that there is no overlap. All this can be formalized with a little thought. The symmetric monoidal structure is all very similar to the structure in $\Turing$.  

There is a subcategory $\TotRegMachine$ whose objects are also the natural numbers and whose morphisms are total register machines, i.e, they have values for all input. There is an obvious inclusion 
$ \TotRegMachine \longhookrightarrow\RegMachine$.

(These ``categories'' have the same problem as $\Turing$ and $\TotTuring$ that we discussed in Technical Point \ref{tech:TuringNotCat}.)
\end{defi}

There is a functor $\RegMachine \lra \CompN$ that takes a register machine to the function it describes. The belief that every computable function on natural numbers can be mimicked by a register machine means that this functor is full. This is simply another statement of the Church Turing thesis that we saw earlier. There is a similar full functor $\TotRegMachine \lra \TotCompN$.

\vspace{.5in}
Besides for register machines there is another way to describe the category $\CompN$. The morphisms can be generated from special morphisms using particular types of generating operations. The special morphisms in the category $\CompN$ are called {\bf basic functions}:
\begin{itemize}
\item The {\bf zero function} $z\colon Nat \lra Nat$ which is defined for all $n$ as $z(n)=0$.
\item The {\bf successor function} $s\colon Nat \lra Nat$ which is defined for all $n$ as $s(n)=n+1$. 
\item The {\bf projections functions} for each $n$ and for each $i\leq n$, $\pi^n_i\colon Nat^n \lra Nat$ which is defined as $\pi^n_i(x_1, x_2, x_3 , \ldots, x_n)=x_i$. 
\end{itemize}
These morphisms are clearly computable and hence in $\CompN$. 

There are three operations on morphisms in $\CompN$:
\begin{itemize}
\item The {\bf composition operation}: given $f_1\colon Nat^m \lra Nat$, $f_2\colon Nat^m \lra Nat$, $\ldots$, $f_n\colon Nat^m \lra Nat$ and $g\colon Nat^n \lra Nat$, there is a function $h\colon Nat^m \lra Nat$ defined as 
\be h(x_1, x_2,  \ldots, x_m)=g(f_1(x_1, x_2, \ldots, x_m), f_2(x_1, x_2, \ldots, x_m), \ldots, f_n(x_1, x_2, \ldots, x_m))\ee
\item The {\bf recursion operation}: given $f\colon Nat^m \lra Nat$ and $g\colon Nat^{m+2} \lra Nat$ there is a function $h\colon Nat^{m+1} \lra Nat$ defined as  
\begin{align}
h(x_1, x_2, x_3 , \ldots, x_m, 0 )&=f(x_1, x_2, x_3 , \ldots, x_m)\\
h(x_1, x_2, x_3 , \ldots, x_m, n+1)&= g(x_1, x_2, x_3 , \ldots, x_m, n, h(x_1, x_2, x_3 , \ldots, x_m, n))  
\end{align}
\item The {\bf $\mu$-minimization operation}: given $f\colon Nat^{m+1} \lra Nat$ there is a function $h\colon Nat^m \lra Nat$ that is defined as follows
\begin{align} 
h(x_1, x_2, x_3 , \ldots, x_m) &= \mbox{ the smallest number $y$ such that }  f(x_1, x_2, x_3 , \ldots, x_m,y)=0\\
							&=\mu_y [f(x_1, x_2, x_3 , \ldots, x_m,y)=0]
\end{align}
If no such $y$ exists, no value is returned for $h$ with those inputs. 
\end{itemize}

One can generate morphisms in $\CompN$ in the following manner. Start with the basic functions and then perform these three operations on them. Add the resulting morphisms of these operations to the set of morphisms that you perform the operations. Continue generating morphisms in this manner.  The conclusion is stated in the following theorem.
\begin{theo}
All the morphisms in $\CompN$ are generated by the operations of composition, recursion and minimization starting from the basic functions. \end{theo}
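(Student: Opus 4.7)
The plan is to prove the two inclusions separately. The easy direction --- that every function built from the basic functions via composition, primitive recursion, and $\mu$-minimization lands in $\CompN$ --- follows by structural induction on the construction. The basic functions $z$, $s$, and $\pi^n_i$ are patently computable (trivial register machines exhibit them). Composition is built into the category $\CompN$ itself. Primitive recursion on computable $f,g$ is implemented by a register machine that first computes $f(x_1,\ldots,x_m)$, then loops a total of $n$ times applying $g$, using a dedicated work variable as counter; this is computable provided $f$ and $g$ are. Finally, $\mu$-minimization on computable $f$ is implemented by sequentially testing $y = 0, 1, 2, \ldots$ until $f(\ldots, y) = 0$, producing the expected (possibly partial) computable function. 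Thus everything generated from the basic functions sits in $\CompN$.

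For the converse --- the substantive direction --- I would route through the register-machine spoke already set up in the excerpt. Since the functor $\RegMachine \lra \CompN$ is full (the Church--Turing thesis in this incarnation), every morphism $f\colon Nat^m \lra Nat^n$ of $\CompN$ is computed by some register machine $R$. The plan is then the classical Kleene simulation argument: encode the entire configuration of $R$ at a given time step (values of all variables together with the current line number) as a single natural number $c$ via a G\"odel pairing; show that the one-step transition $\sigma\colon Nat \lra Nat$ of $R$ can be written from the basic functions using only composition and primitive recursion; iterate $\sigma$ via recursion to obtain a history function $H(x,t)$ giving the configuration at time $t$ on input $x$; apply $\mu$-minimization to locate the first halting time $T(x) = \mu_t[\mathrm{halted}(H(x,t)) = 0]$; and finally read off the output values from $H(x,T(x))$ by a primitive recursive decoding.

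The main obstacle is the bookkeeping required to set up the simulation. One must build a full primitive recursive toolkit --- addition, multiplication, truncated subtraction, characteristic functions of equality and order, a pairing function (Cantor's, say) with its projections, bounded search, and encodings of finite tuples of arbitrary length --- entirely from $z$, $s$, and the $\pi^n_i$ using composition and recursion. Each such routine is a short standard exercise, but there are many, and the case split defining $\sigma$ (one clause per instruction type of $R$) must then be phrased using these characteristic functions combined in an additive ``if-then-else'' style. I would organize this as a ladder of primitive recursive lemmas and then assemble $\sigma$ mechanically from them, rather than writing out every primitive recursive definition in full.
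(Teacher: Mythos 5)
Your proposal is correct and is essentially the classical argument (structural induction for the easy inclusion, then Kleene-style arithmetization of a register-machine computation with a single terminal $\mu$-minimization for the converse); the paper does not prove the theorem itself but cites Davis, Cutland, and Boolos, whose proofs follow exactly this route. No substantive gap --- only the routine caveat that morphisms into $Nat^n$ for $n>1$ are obtained by tupling the $n$ component functions, which both you and the paper leave implicit.
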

This result is proven in Chapter 3 of \cite{davis}, Chapter 2 of \cite{cutland}, and Chapter 6 of \cite{boolos}. 

It is interesting to examine which of these morphisms are in $\TotCompN$. All the basic functions are in $\TotCompN$. Notice that if the $f_i$s and $g$ of the composition definition are in $\TotCompN$ then so is $h$, i.e., $\TotCompN$ is closed under the composition operation. 
$\TotCompN$ is also closed under the recursion operation. In contrast, $\TotCompN$ is not closed under the $\mu$-minimization operation. That is, there could be an $f$ and a $x_1, x_2, x_3 , \ldots, x_m$ such that there does not exist a $y$ with $f(x_1, x_2, x_3 , \ldots, x_m,y)=0$. In that case $h(x_1, x_2, x_3 , \ldots, x_m)$ is not defined. $h$ is then a partial function and hence a morphism in $\CompN$ but not in $\TotCompN$. 

When the $\mu$-minimization operation is omitted we have an interesting class of total computable functions. 
\begin{defi} The set of morphisms  generated by the operations of composition and recursion from the basic functions are called {\bf primitive recursive functions}. There is a subcategory $\PRCompN$ of $\TotCompN$ which has the same objects (products of natural numbers types) and its morphisms are the primitive recursive function. There are obvious inclusions \be\PRCompN \longhookrightarrow \TotCompN \longhookrightarrow \CompN.\ee 
\end{defi}

We close our discussion of primitive recursive function with an interesting historical vignette. Primitive recursive functions were defined by David Hilbert. He believed that this category of functions was what was meant by a (total) computable function. Hilbert had a student named Wilhelm Ackermann who showed that the class of all primitive recursive functions does not contain all total computable functions. That is, there is a morphism in $\TotCompN$ called {the \bf Ackermann function} $A\colon Nat \times Nat \lra Nat$ that is computable but is not primitive recursive. $A$ is defined as follows:   
\be
  A (m, n) = \left\{
      \begin{array}{ll}
        n + 1          &:  \mathrm{if}\ m = 0 \\
        A (m - 1, 1)   &:  \mathrm{if }\ m > 0 \mathrm{\ and\ } n = 0 \\
        A (m - 1, A (m, n - 1))   &: \mathrm{if}\ m > 0 \mathrm{\  and\  } n > 0 
      \end{array}
              \right.
\ee

The fact that the Ackermann function is not  primitive recursive can be seen in Section 4.9 of \cite{davis}. (There is a lot of fun in programming the Ackermann function and determining its values. Try to get your computer to find the value of $f(4,4)$.)

\subsection{Manipulating Bits: Lower-Left Spoke.}

While one can think of a computation as manipulating strings or numbers, the most obvious way to think of a computation is as a process that  manipulates bits. After all, every modern computer is implemented by manipulating bits. 

We need a type which we did not need before and we were not explicit about it. For every type $T$, there is a type $T^*$ which is finite strings of type $T$. In particular, the type $Bool^*$ is the type of strings of Boolean type, that is,  strings of $0$'s and $1$'s. 

\begin{defi}
The category $\BoolComp$ has powers of $Bool^*$ type as objects. A typical objects is $(Bool^*)^n$. The morphisms in this category are computable functions whose input and output are powers of strings of Boolean types. These functions might be partial functions.

There is a a subcategory $\TotBoolComp$ that has the same objects but whose morphisms are total computable functions. 
There is an obvious inclusion $\TotBoolComp\longhookrightarrow\BoolComp$
\end{defi}

There is an inclusion function $Inc \colon \BoolComp \longhookrightarrow \CompFunc$ such that $Inc((Bool^*)^n)=(Bool^*)^n$ that is full and faithful. However since any sequence of types can be encoded as Boolean variables, we have (similar to Theorem \ref{theo:encodingtypes}) that this inclusion function is an equivalence. Similarly, the functor $Inc \colon \TotBoolComp \longhookrightarrow \TotCompFunc$ is an equivalence.  

What type of physical devices mimic Boolean functions? Boolean circuits. In order for our circuits to be as powerful as the other models of computation, our circuits will need families of inputs and outputs. Let us put all such circuits together in one category called $\CircuitFam$. 
\begin{defi}
The objects of the category $\Circuit$ are the finite sequences of natural numbers, e.g., $5,7,12,23,0,13$. We will denote a typical object as $x_1,x_2, \ldots x_m$. The set of morphisms from $i_1, i_2, \ldots i_m$ to $o_1, o_2, \ldots o_n$ is the set of logical circuits (built from ANDs, ORs, NOTs, NANDs, NOR, etc.) with $m$ families of inputs and $n$ families of outputs. The $t$th family of inputs will have $i_t$ wires. The $t$th family of outputs will have $o_t$ wires. Such a circuit will be denoted as \be C^{i_1, i_2, \ldots i_m}_{o_1, o_2, \ldots o_n}\ee and will be drawn as follows
\be
\Qcircuit @C=2em @R=2em {
&\ustick{i_1}&{/} \qw & \multigate{3}{\qquad C\qquad } & \qw {/} &\ustick{o_1}\qw&\\
&\ustick{i_2}&{/} \qw & \ghost{\qquad C\qquad} & \qw {/} &\ustick{o_2}\qw   \\
&\ustick{\vdots}&{/} \qw & \ghost{\qquad C\qquad} & \qw {/} &\ustick{\vdots}\qw \\
&\ustick{i_m}&{/} \qw & \ghost{\qquad C\qquad} & \qw {/} &\ustick{o_n}\qw\\
}
\ee
The point is that with this formalism we can discuss circuits with $m$ Boolean strings of any length as inputs and have $n$ Boolean strings of any length as output. We need one more requirement: the circuits must be able to be described a computer (this is to make sure that we are not talking about {\it all} functions.)  $\CircuitFam$ almost forms a category. The identity circuit is simply the correct number of plain wires without any gates.  Similar to the problem mentioned in Technical Point \ref{tech:TuringNotCat}, $\CircuitFam$ is not really a category. Composition of circuits is given as follows: if there are circuits 
\be C^{i_1, i_2, \ldots i_m}_{o_1, o_2, \ldots o_n} \mbox{              and            }  C'^{o_1, o_2, \ldots o_n}_{p_1, p_2, \ldots p_k}, \ee
then they can be combined by attaching the output of the first with the input of the second to form
\be C''^{i_1, i_2, \ldots i_m}_{p_1, p_2, \ldots p_k}. \ee We can draw this attached circuits as follows:
\be
\Qcircuit @C=2em @R=2em {
&\ustick{i_1}&{/} \qw & \multigate{3}{\qquad C\qquad } & \qw {/} &\ustick{o_1}\qw&{/} \qw & \multigate{3}{\qquad C'\qquad } & \qw {/} &\ustick{p_1}\qw&\\
&\ustick{i_2}&{/} \qw & \ghost{\qquad C\qquad} & \qw {/} &\ustick{o_2}\qw & {/} \qw &\ghost{\qquad C'\qquad} & \qw {/} &\ustick{p_2}\qw  \\
&\ustick{\vdots}&{/} \qw & \ghost{\qquad C\qquad} & \qw {/} &\ustick{\vdots}\qw &{/} \qw & \ghost{\qquad C'\qquad} & \qw {/} &\ustick{\vdots}\qw\\
&\ustick{i_m}&{/} \qw & \ghost{\qquad C\qquad} & \qw {/} &\ustick{o_n}\qw&{/} \qw & \ghost{\qquad C'\qquad} & \qw {/} &\ustick{p_k}\qw\\
}
\ee
Composition with the identity circuits does not change anything. Associativity of composition is straightforward.

There is a monoidal structure on $\CircuitFam$. The monoidal structure on the objects is simply composition of sequences of natural numbers. That is, 
\be i_1, i_2, \ldots i_m \otimes i'_1, i'_2, \ldots i'_{m'} = i_1, i_2, \ldots i_m,  i'_1, i'_2, \ldots i'_{m'}. \ee
The monoidal structure on circuits is given by placing the circuits in parallel. In pictures:
\be
\Qcircuit @C=2em @R=2em {
&\ustick{i_1}&{/} \qw & \multigate{3}{\qquad C\qquad } & \qw {/} &\ustick{o_1}\qw&\\
&\ustick{i_2}&{/} \qw & \ghost{\qquad C\qquad} & \qw {/} &\ustick{o_2}\qw   \\
&\ustick{\vdots}&{/} \qw & \ghost{\qquad C\qquad} & \qw {/} &\ustick{\vdots}\qw \\
&\ustick{i_m}&{/} \qw & \ghost{\qquad C\qquad} & \qw {/} &\ustick{o_n}\qw\\
&\ustick{j_1}&{/} \qw & \multigate{3}{\qquad C'\qquad } & \qw {/} &\ustick{k_1}\qw&\\
&\ustick{j_2}&{/} \qw & \ghost{\qquad C'\qquad} & \qw {/} &\ustick{k_2}\qw   \\
&\ustick{\vdots}&{/} \qw & \ghost{\qquad C'\qquad} & \qw {/} &\ustick{\vdots}\qw \\
&\ustick{j_{m'}}&{/} \qw & \ghost{\qquad C'\qquad} & \qw {/} &\ustick{k_{n'}}\qw\\
}
\ee

\noindent Formally, the monoidal structure on morphisms is given as follows:
\be C^{i_1, i_2, \ldots i_m}_{o_1, o_2, \ldots o_n} \otimes  C'^{j_1, j_2, \ldots j_{m'}}_{k_1, k_2, \ldots, k_{n'}} =  C''^{i_1, i_2, \ldots i_{m}, j_1, j_2, \ldots, j_{m'}}_{o_1, o_2, \ldots o_{n}, k_1, k_2, \ldots, k_{n'}}\ee
where $C''$ is just the circuit $C$ next to the circuit $C'$. The symmetric monoidal structure comes from twisting the wires across each other as follows:

\be \xymatrix@C=2em @R=0em{ 
\bullet \ar[rrrrddddddd]&&&&\bullet\\
\bullet \ar[rrrrddddddd]&&&& \bullet\\
\bullet \ar[rrrrddddddd]&&&& \bullet\\
\bullet \ar[rrrrddddddd]&&&& \bullet\\
           &&&& \bullet\\
\\
\\
\bullet \ar[rrrruuuuuuu]&&&&\bullet\\
\bullet \ar[rrrruuuuuuu]&&&& \bullet\\
\bullet \ar[rrrruuuuuuu]&&&& \bullet\\
\bullet \ar[rrrruuuuuuu]&&&& \bullet\\
\bullet \ar[rrrruuuuuuu]&&&& \\
}
\ee

There is a subcategory with the same objects and with only circuits that do not go into infinite loops called $\TotCircuitFam$. There is an obvious inclusion $\TotCircuitFam \longhookrightarrow\CircuitFam$
\end{defi}

 Notice that if a circuit does not have any feedback it is a total function. In contrast, if a function does have feedback, there might be some inputs that force the circuit into an infinite loop. (All computers have such circuits. Every time your computer is in an infinite loop, it is because there is feedback in the circuits.)

There is a symmetric monoidal functor $P\colon \CircuitFam \lra \BoolComp$ that takes the object $i_1, i_2, \ldots i_m$ to $(Bool^*)^m$ and takes a logical circuit with $m$ families of input wires and $n$ families of output wires to the Boolean computable function from $(Bool^*)^m$ to $(Bool^*)^n$ that the circuit describes. This functor really describes a congruence on the category $\Circuit$. The circuit $C\colon (i_1, i_2, \ldots i_m) \lra (o_1, o_2, \ldots o_n)$ is equivalent to $C'\colon (i_1, i_2, \ldots i_m) \lra (o_1, o_2, \ldots o_n)$ if $C$ and $C'$ describe the same Boolean function, i.e., $P(C)=P(C')$. It is well known that every Boolean function $h\colon (Bool^*)^m \lra (Bool^*)^n$ can be mimicked by a circuit with only NAND gates and the fanout operation. Another way to say this is that $P^{-1}(h)$ contains a circuit with only NAND gates and the fanout operation. Yet another way of saying this is that every circuit in $\Circuit$ is equivalent to a circuit with only NAND gates and fanout operations.

\vspace{.5in}
While all these different ways of characterizing computable functions are important, it would be somewhat redundant to prove any theorem in more than one way. They are all interchangeable. We choose to discuss the major ideas in theoretical computer science using one of the first models of computation, Turing machines. 

%
%
\section{Computability Theory}
This part of the mini-course deals with the question of which functions are computable, and more interestingly, which functions are not computable. We will be stating our theorems in terms of the top spoke of The Big Picture. The categories $\TotCompString$ and $\CompString$ will not play a role here and so we will concentration on the following section of The Big Picture: 

\be \xymatrix{
\TotTuring\ar@{->>}[dd]_D \ar@{^{(}->}[r]& \Turing\ar@{->>}[dd]^Q\\
\\
\TotCompFunc  \ar@{^{(}->}[r]& \CompFunc \ar@{^{(}->}[r]&\Func.
}\ee

A large part of our discussion of computability theory is determining if a given morphism in $\Func$ is in $\CompFunc$ or in $\TotCompFunc$. Another way of looking at this is to consider the following functors 

\be \xymatrix{
\TotTuring\ar@{->}[ddr]_D \ar@{^{(}->}[rr]&& \Turing\ar@{->}[ddl]^Q\\
\\
&\Func\\
}\ee
and asking if a particular morphism in $\Func$ is in the image of $Q$ or in the image $D$ or neither.

In the literature there various names for morphisms in $\Func$.   
\begin{defi}
A function in $\CompFunc$ is obviously called {\bf computable} but it is also called {\bf Turing computable}. Sometimes a function that a Turing machine can mimic is also called {\bf solvable}.  There is special nomenclature for morphisms whose codomain is $Bool$. $f\colon Strings^n \lra Bool$ is called a {\bf decision problem}. An instance of the problem is put in the input of the function and the output is either true or false. If the decision problem is in $\TotCompFunc$, it is called {\bf recursive} or {\bf Turing-decidable} or simply {\bf decidable}. This means that there is a Turing machine that can give a ``true-false'' answer to this decision problem. If the decision problem is in $\CompFunc$, that is, in the image of $Q$, then we call that function {\bf recursively enumerable} or simply {\bf r.e.}, or  {\bf Turing-recognizable}, or {\bf semi-decidable}. That means that, for a given input to the function, a Turing machine can recognize when the answer is true. Decision problems will be a major focus in the coming pages.    
\end{defi}

\subsection{Turing's Halting Problem.}

While we are all familiar with many functions that computers can mimic, it is interesting to see functions that cannot be mimicked by any computer. First some preliminaries. A Turing machine is basically a set of rules about how to manipulate symbols within strings. The set of states, alphabet, and rules can all be encoded into a finite natural number. It is easy to see this because all the information can be encoded as a string and then we can encode the string into a binary number (using ASCII for example). That binary number is the number of the Turing machine. Another way to see this to is realize that every computer program can be stored as a sequence of zeros and ones which can be thought of as a binary number. The numbers for most Turing machines will be astronomically large, but that does not concern us. The point we are making is that every Turing machine can be encoded as a unique natural number. We do not make the requirement that the encoding should respect the composition of Turing machines or that the Turing machines with $m$ input tapes should be encoded as numbers less then Turing machines with $m+1$ input tapes. In other words we do not insist that the encoding respects the categorical structure of $\Turing$. An encoding can be thought of as an injective functor    
\be Enc\colon \coprod_m \coprod_n Hom_\Turing(m,n)  \lra d(\n)\ee
 from the discrete set of all Turing machines to the discrete set of natural numbers. Throughout our discussion, we are going to work with one encoding function $Enc$. If there is a Turing machine $T$ with $Enc(T)=y$ then we call $T$ ``Turing machine $y$''.  

We are interested when a Turing machine halts and when a Turing machine goes into an infinite loop. For many Turing machines, whether or not the Turing machine halts depends on its input. Let us simplify the problem by considering Turing machines that only accepts a single natural number as input.  There is a morphism in $\Func$ called $Halt\colon Nat \times Nat \lra Bool$ which is defined as follows
\be
  Halt(x, y) = \left\{
      \begin{array}{ll}
        1 &:  \mbox{if Turing machine $y$ on input $x$ halts.} \\
        0 &:  \mbox{if Turing machine $y$ on input $x$ does not halt. } \\
      \end{array}
              \right.
\ee
The {\bf Halting decision problem} asks if one is able to write a Turing machine to mimic the $Halt$ function. $Halt$ is a total function, but is it  in $\TotCompFunc$, $\CompFunc$ or only in $\Func$?  

\begin{theo}{\bf (Turing's undecidability of the halting problem.)}
The $Halt$ function is not in $\TotCompFunc$. That is, $Halt$ is not recursive (or Turing-decidable). \end{theo}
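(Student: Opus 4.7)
The plan is to prove this by contradiction via a diagonal argument. I would suppose that $Halt$ lies in $\TotCompFunc$, so by the Church--Turing thesis (applied in the direction that says total computable functions lie in the image of $D\colon \TotTuring \lra \TotCompFunc$) there is a total Turing machine $H$ with two input tapes and one output tape whose associated function is $Halt$. The goal is to derive a contradiction by building from $H$ a Turing machine whose own halting behaviour refutes what $H$ is supposed to compute.

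First I would construct an auxiliary Turing machine $M$ with a single input tape as follows: on input $y$, viewed as the encoding of a Turing machine, $M$ copies $y$ onto both input tapes of a hard-coded simulation of $H$, runs this simulation to completion (guaranteed to halt since $H$ is total), and then branches on the output bit. If the output is $0$, meaning Turing machine $y$ does not halt on input $y$, then $M$ halts immediately. If the output is $1$, meaning Turing machine $y$ does halt on input $y$, then $M$ deliberately enters an infinite loop. Each ingredient (simulating a fixed Turing machine, conditional branching on a single bit, and a perpetual loop) is plainly implementable, so $M$ is a well-defined object of $\Turing$ and the fixed encoding yields a natural number $m = Enc(M)$.

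The second step is the diagonal punchline: feed $m$ to $M$ itself. By the construction of $M$, Turing machine $M$ halts on input $m$ if and only if $H(m,m)=0$, which by the definition of $Halt$ holds if and only if Turing machine $m$, which is $M$, does not halt on input $m$. This contradiction shows that no such $H$ can exist, so $Halt$ is not in $\TotCompFunc$.

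The main obstacle is the effective self-reference: one must verify that the passage from $H$ to $M$ is sufficiently uniform that $M$ really is a Turing machine, and therefore really does receive a natural-number code $m$ that can be fed back into $M$. This is a routine use of the closure of $\Turing$ under composition and the existence of the encoding functor $Enc$, but it is where the diagonal argument actually bites; once $m$ exists, the contradiction is forced by the defining equation of $Halt$ alone.
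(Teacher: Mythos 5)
Your proposal is correct and is essentially the paper's own argument: your machine $M$ is precisely the composite $ParNOT \circ Halt \circ \Delta$ that the paper calls $Halt'$ (copying $y$ onto both tapes is $\Delta$, and halting on output $0$ while looping on output $1$ is $ParNOT$), and the diagonal step of feeding $m=Enc(M)$ back into $M$ is the same contradiction the paper derives with $y_0=Enc(T_0)$. No substantive difference in route, only a shift from categorical composition of morphisms to explicit machine-level description.
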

\begin{proof}
First some intuition. The proof is an example of a self-referential paradox. A famous self-referential paradox is the ``Liar paradox'' which says that the sentence ``This sentence is false'' is true if and only if it is false, i.e., it is a contradiction. Another example is G\"odel's incompleteness theorem which says that the mathematical statement ``This statement is unprovable'' is true but unprovable. Here, with the halting problem, the same type of self-referential statement is made as follows:
\begin{verse}
If there was a way to solve the halting problem, \\ then one can construct a program that performs the following task:\\ ``When you ask me if I will stop or go into an infinite loop,\\ then I will give the wrong answer.'' 
\end{verse}
Since computers do not give wrong answers, this program does not exist and hence the halting problem is unsolvable. 

In detail, the proof is a proof by contradiction. Assume (wrongly) that $Halt$ is in $\TotCompFunc$. Compose $Halt$ with the diagonal morphism $\Delta\colon Nat \lra Nat \times Nat$ and the ``partial Not'' morphism $ParNOT\colon Bool \lra Bool$. $\Delta$ is defined as $\Delta(n)=(n,n)$ and is in $\TotCompFunc$. $ParNOT$ is defined as follows

\be
  ParNOT(x) = \left\{
      \begin{array}{ll}
        1 &:  \mbox{if }x=0 \\
        \uparrow &:  \mbox{if }x=1. 
      \end{array}
              \right.
\ee  
where $\uparrow$ means go into an infinite loop. $ParNOT$ is not in $\TotCompFunc$ but is in $\CompFunc$. 
After composing as follows \be \xymatrix{ Nat\ar[rr]^\Delta\ar@/^.3in/[rrrrrr]^{Halt'}&&Nat \times Nat \ar[rr]^{Halt}&& Bool\ar[rr]^{ParNOT}&&Bool }\ee we obtain $Halt'$ which is in $\CompFunc$ since all of the morphisms it is composed of are in $\CompFunc$ (by assumption).

$Halt'$ is defined as 
\be
  Halt'(x) = \left\{
      \begin{array}{ll}
        1 &:  \mbox{if Turing machine $x$ on input $x$ does not halt.} \\
        \uparrow &:  \mbox{if Turing machine $x$ on input $x$ does halt.  } \\
      \end{array}
              \right.
\ee
Since $Halt'$ is in $\CompFunc$ there is some Turing machine, say $T_0$ that mimics $Halt'$. Suppose $Enc(T_0)=y_0.$ Let us ask $Halt'$ about itself by plugging in $y_0$ into $Halt'$. $Halt'(y_0)$ halts and outputs a $1$ if and only if Turing machine $y_0$ on input $y_0$ does not halt but goes into an infinite loop, i.e., $Halt'(y_0)=\uparrow$. This is a contradiction. The only thing we assumed was that $Halt$ was in $\TotCompFunc$. We conclude that $Halt$ is not in the subcategory $\TotCompFunc$ of $\Func$.    
\end{proof}

In contrast to the total function $Halt$, there is a partial halting function $ParHalt \colon Nat \times Nat \lra Bool$ defined as 
\be
  ParHalt(x, y) = \left\{
      \begin{array}{ll}
        1 &:  \mbox{if Turing machine $y$ on input $x$ halts.} \\
        0 \mbox{ or }\uparrow&:  \mbox{if Turing machine $y$ on input $x$ does not halt. } \\
      \end{array}
              \right.
\ee

\begin{theo} $ParHalt$ is in $\CompFunc$. 
\end{theo}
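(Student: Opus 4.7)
The plan is to exhibit a single Turing machine $U$---a universal Turing machine---that mimics $ParHalt$ as a partial function. Since $Q\colon \Turing \lra \CompFunc$ sends a Turing machine to the function it computes, it suffices to produce $U \in Hom_\Turing(2,1)$ with $Q(U) = ParHalt$, and then $ParHalt$ automatically lies in the image of $Q$, hence in $\CompFunc$.

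First I would set up $U$ so that it reads its two inputs $x$ and $y$ and, using the computable decoding associated with the encoding functor $Enc$, reconstructs on a dedicated region of its work tape the transition table $\delta_y$ of the Turing machine $T_y$ with $Enc(T_y)=y$. This step is where the effectiveness of $Enc$ is essential: without a computable way to recover $\delta_y$ from $y$ the simulator could not be built. Next, $U$ reserves additional regions of its work tape to hold (i) the simulated contents of the tapes of $T_y$, with appropriate delimiters so that $T_y$'s alphabet of arbitrary finite size is encoded over $U$'s fixed alphabet, (ii) the simulated head positions, and (iii) the simulated current state. The input $x$ is copied into the simulated input-tape region, and the simulated state is initialized to the start state.

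Then $U$ enters its main loop: at each iteration, it reads the simulated state and the simulated symbols under the simulated pointers, looks up the matching rule in its stored copy of $\delta_y$, and then updates the simulated tape contents, the simulated pointer positions, and the simulated state accordingly. Two outcomes are possible. If $T_y$ eventually halts on $x$, at some step the simulated configuration will match no rule of $\delta_y$; $U$ detects this, writes $1$ on its output tape, and halts, yielding $ParHalt(x,y)=1$. If $T_y$ never halts on $x$, then $U$'s main loop never terminates, so $U$ itself does not halt; this is explicitly allowed by the definition of $ParHalt$, which is permitted to return $\uparrow$ in that case. In either case $Q(U)(x,y) = ParHalt(x,y)$.

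The only real obstacle is the explicit bookkeeping by which a fixed-alphabet, fixed-state machine $U$ faithfully simulates the step-by-step behavior of an arbitrary $T_y$ whose alphabet and state set are unbounded in $y$; this is the classical construction of a universal Turing machine and is carried out in detail in any standard reference, so I would invoke it rather than grind through the tape-encoding conventions. Modulo that construction, $U$ demonstrably mimics $ParHalt$, and hence $ParHalt \in \CompFunc$.
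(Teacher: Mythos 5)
Your proposal is correct and follows the same route as the paper: both arguments exhibit a (universal) Turing machine that simulates machine $y$ on input $x$, outputs $1$ if the simulation halts, and runs forever otherwise, which is exactly the behavior $ParHalt$ permits. You simply spell out the standard universal-machine bookkeeping that the paper leaves implicit.
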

\begin{proof} We shall describe a Turing machine that can simulate Turing machine $y$ on input $x$. For a given $x$ and $y$, a Turing machine can look at the rules of Turing machine $y$ and simulate it on input $x$.  If Turing machine $y$ halts on input $x$ then output a 1. As long Turing machine $y$ on input $x$ does not halt, the simulation will go on. 
\end{proof}

We will prove that $ParHalt$ is not in $\TotCompFunc$ and find a morphism that is not even in $\CompFunc$. First a definition and theorem.  
\begin{defi}
The function $NOT\colon Bool \lra Bool$ is defined as $NOT(0)=1$ and $NOT(1)=0$ is obviously in $\TotCompFunc$.
\end{defi}
\begin{theo}\label{theo:fandfcomp}
Let $f \colon Seq \lra Bool$ be in $\CompFunc$ and let \be f^c=NOT \circ f\colon Seq \lra Bool \lra Bool.\ee Then $f$ is in $\TotCompFunc$ if and only if $f$ and $f^c$ are in $\CompFunc$.
\end{theo}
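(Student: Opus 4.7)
The plan is to prove the biconditional in two parts, where the forward direction follows purely from the closure properties of the categories and the reverse direction requires a parallel-simulation (dovetailing) argument.

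For the forward direction ($\Rightarrow$), suppose $f \in \TotCompFunc$. Then $f \in \CompFunc$ via the inclusion $\TotCompFunc \longhookrightarrow \CompFunc$. Since $NOT \colon Bool \lra Bool$ is obviously a total computable function, and $\TotCompFunc$ is closed under composition, the composite $f^c = NOT \circ f$ lies in $\TotCompFunc \subseteq \CompFunc$. This part is a routine application of the categorical structure.

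For the reverse direction ($\Leftarrow$), suppose $f, f^c \in \CompFunc$. By the Church-Turing thesis, the functor $Q \colon \Turing \twoheadrightarrow \CompFunc$ is full, so there exist Turing machines $T_f$ and $T_{f^c}$ that mimic $f$ and $f^c$ respectively: each halts and produces the correct Boolean output on inputs where the corresponding function returns $1$. I would then construct a new Turing machine $T$ that, on input $x$, dovetails the simulations of $T_f$ and $T_{f^c}$, i.e., alternates one step of each. Reading $f$ as a decision problem (a total function $Seq \lra Bool$), for every input $x$ we have either $f(x) = 1$ or $f(x) = 0$; in the first case $T_f$ eventually halts with output $1$, and in the second case $f^c(x) = 1$ so $T_{f^c}$ eventually halts with output $1$. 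Therefore at least one of the two simulations terminates. Whichever halts first determines the output: if $T_f$ halts, $T$ writes $1$; if $T_{f^c}$ halts, $T$ writes $0$. The resulting machine $T$ halts on every input and computes $f$, placing $f$ in $\TotCompFunc$.

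The main obstacle is justifying that the dovetailing procedure terminates on every input. This hinges on the correct reading of the complementary function $f^c$: that the "accept set" of $f^c$ together with the accept set of $f$ covers all of $Seq$, which is precisely the content of $f^c = NOT \circ f$ interpreted as the recognizer for the no-instances. Once this is in hand, the dovetailing trick converts two recognizers into a decider, which is the classical characterization of decidable (recursive) problems as those whose recognizer and co-recognizer are both computable.
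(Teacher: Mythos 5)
Your proposal is correct and follows essentially the same route as the paper: the forward direction is the same closure argument, and your dovetailing of the two recognizers is exactly what the paper packages categorically as the composite $Parallel \circ (f \times f^c) \circ \Delta$, illustrated there by the ``decider built out of two recognizers'' picture. The only difference is presentational: you make the parallel simulation explicit at the level of Turing machines, while the paper encodes it as a morphism $Parallel \colon Bool \times Bool \lra Bool$.
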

\begin{proof}
If $f$ is in $\TotCompFunc$, then $f$ is definitely in $\CompFunc$. Since $NOT\colon Bool \lra Bool$ is in $\TotCompFunc$  then $f^c=NOT \circ f$ is in $\TotCompFunc$ and hence in $\CompFunc$. 

The other direction of the proof is a little more complicated. One can gain intuition by looking at the three parts of Figure \ref{pic:deciderandrec}. In (i) we see a function that gives a true-false answer. One imagines the input entering on the left and either true or false is marked on the right. In (ii) we have a recognizer. The input enters on the left and true is answered on the right or there is no answer. Part (iii) of the Figure shows how one can build a decider by using two recognizers. The input is entered on the left and it goes into two recognizers. Both recognizers are executed in parallel. Since one of them is true, one of them will answer true.
 
\begin{figure}[h!]
\centering
 \includegraphics[width=10cm, height=9cm]{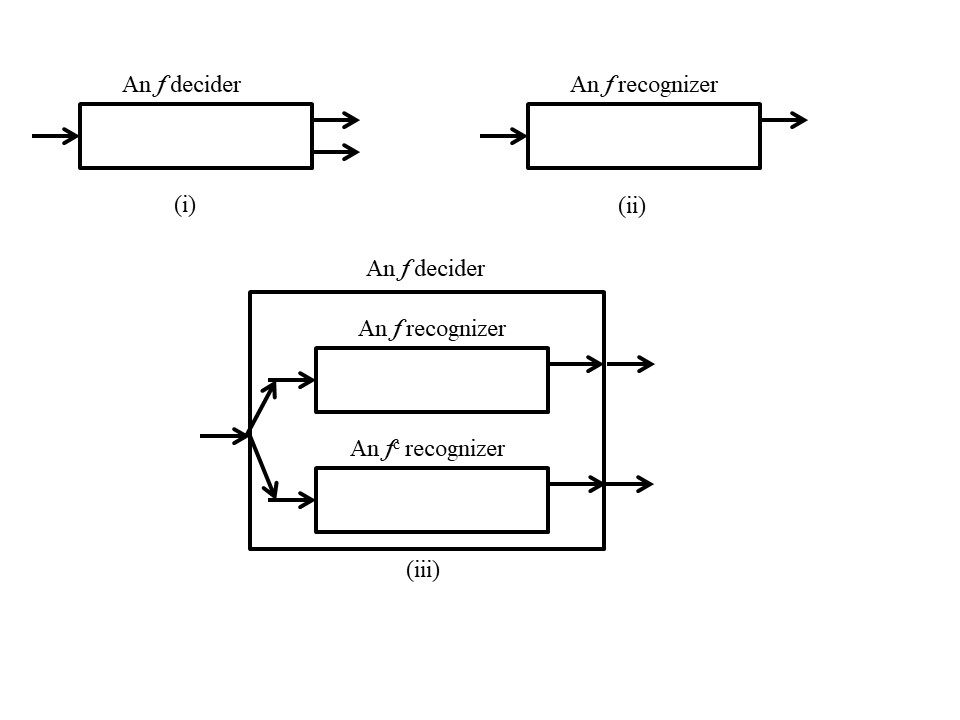}
  \caption{(i)a decider, (ii)a recognizer, and (iii)a decider built out of two recognizers} \label{pic:deciderandrec}
\end{figure}

In detail, assume that $f$ and $f^c$ are in $\CompFunc$.  The function that will be used to parallel processes those two functions at one time will be the function \be Parallel\colon Bool \times Bool \lra Bool \ee defined as 
\be
  Parallel(x, y) = \left\{
      \begin{array}{ll}
        1 &:  \mbox{if }x=1 \\
        0 &:  \mbox{if }y=1. 
      \end{array}
              \right.
\ee

The composition of the morphisms in $\CompFunc$ is given as follows. 
\be \xymatrix{Seq\ar[r]^{\Delta}&Seq\times Seq\ar[r]^{f\times f}\ar@/^.3in/[rr]^{f \times f^c}&Bool \times Bool\ar[r]^{id \times NOT}& Bool \times Bool\ar[rr]^{Parallel}&&Bool}\ee
This morphism is a total morphism and hence is in $\TotCompFunc$. 
\end{proof}

From Theorem \ref{theo:fandfcomp} we see that the partial halting function \be ParHalt^c=NOT \circ ParHalt \colon Nat \times Nat \lra Bool \lra Bool\ee  
\be
  ParHalt^c(x, y) = \left\{
      \begin{array}{ll}
        1 &: \mbox{if Turing machine $y$ on input $x$ does not halt.} \\
        0\mbox{ or } \uparrow&:  \mbox{if Turing machine $y$ on input $x$ halts. } \\
      \end{array}
              \right.
\ee
is not even in $\CompFunc$. It is a partial function that is in $\Func$ but no Turing machine can mimic it. 
This follows our intuition: while we can give a positive response when a Turing machine halts, how can we ever give a positive response that a Turing machine {\it will never} halt?

\subsection{Other unsolvable problems.}
While the halting problem is undecidable, it is just the beginning of the story. There are many other decision problems that are as hard or harder than the halting problem. They too are undecidable. But first we need a way of comparing decision problems.
\begin{defi}
Let $f\colon Seq \lra Bool$ and $g\colon Seq' \lra Bool$ be two functions in $\Func$. We say that $f$ is {\bf reducible} or {\bf reduces} to $g$ if there exists a $h\colon Seq \lra Seq'$ in $\TotCompFunc$ such that 
\be \xymatrix{Seq\ar[rr]^h \ar[ddr]_f&& Seq' \ar[ddl]^g\\ \\& Bool} \label{diag:reduction}\ee
commutes. We write this as $f \leq g$. If $f \leq g$ and $g \leq f$ then we write $f \equiv g$ and say they are both part of the same {\bf computability class}. 
\end{defi}
The way to think about such a reduction is that $h$ changes an $f$ input into a $g$ input. Letting $x$ be the input to $f$, the commuting triangle requirement means 
\be f(x) \mbox{ is true if and only if } g(h(x)) \mbox{ is true.}\ee
Notice that if there is a way to solve $g$ then there is definitely a way to solve $f$: simply use $h$ to change the input of $f$ into an input of $g$ and then solve it. The contrapositive of this statement is also important. If there is no way to solve $f$ then there is no way to solve $g$. Another way to say this is that $g$ is as hard or harder than $f$. 

A categorical way to examine reduciblity is to consider the following two functors
\be \xymatrix{ \TotCompFunc  \ar@{^{(}->}[rr]^{Inc}&& \Func && \one\ar[ll]_{Const_{Bool}} }\ee
where the functor on the left is the inclusion  and the functor on the right picks out the type $Bool$. We then take the comma category $(Inc, Const_{Bool})$. The objects of this category are morphisms in $\Func$ from some sequence of types to $Bool$. The morphisms are total computable functions that make Diagram \ref{diag:reduction} commute.

Let us use this notion of reducibility to prove that some morphisms are like $Halt$ and are not in $\CompFunc$. 
\begin{examp}
The {\bf nonempty program problem} asks if a given (number of a) Turing machine will have a nonempty domain. That is, will the given Turing machine accept any of its inputs. There is a morphism in $\Func$ called $Nonempty\colon Nat \lra Bool$ which is defined as follows
\be
  Nonempty(y) = \left\{
      \begin{array}{ll}
        1 &:  \mbox{if Turing machine $y$ has a nonempty domain} \\
        0 &:  \mbox{if Turing machine $y$ has empty domain. } \\
      \end{array}
              \right.
\ee

We show that the halting problem reduces to the nonempty program problem as in 
\be \xymatrix{Nat\times Nat\ar[rr]^{h} \ar[ddr]_{Halt}&& Nat \ar[ddl]^{Nonempty}\\ \\& Bool.}\ee
The total computable function $h$ is defined as follows for Turing machine $y$ and for input $x$. $h(x,y)=y'$ where $y'$ is the number of the Turing machine that performs the following task:

Turing machine $y'$: on input $w$
\begin{enumerate}
\item If $w\neq x$ reject. Stop.
\item If $w = x$ execute Turing machine $y$ on input $x$. If Turing machine $y$ accepts $x$, accept. Stop.
\end{enumerate}

Notice that Turing machine $y'$ depends on $x$ and $y$. Also notice that function $h$ is easily seen to be totally computable. That means that a program can easily compute $y'$ if it is given $x$ and $y$. Now consider Turing machine $y'$. It has at most one number in its domain. Only number $x$ can possibly be in its domain. Furthermore, $Nonempty(y')=1$ iff $Nonempty(h(x,y))=1$ iff the domain of Turing machine $y'$ is not empty iff $x$ is in the domain of Turing machine $y'$ iff  Turing machine $y$ accepts $x$ iff $Halt(x,y)=1$. But we already know that it is impossible to solve the halting problem. So it must be impossible to solve the nonempty problem.  
\end{examp}

\begin{examp}
The opposite of the nonempty program problem is the {\bf empty program problem}. This tells if the domain of (the number of) a given Turing machine is empty. The empty program problem is undecidable because if it was decidable, then we would be able to compose with the $NOT\colon Bool \lra Bool$ to get a decider for the nonempty program problem
\be \xymatrix{ Nat \ar[rr]^{Empty}\ar@/^.3in/[rrrr]^{Nonempty}&& Bool\ar[rr]^{NOT} && Bool. }\ee
Since we know that nonempty is not computable, we know that empty is not computable.  
\end{examp}

\begin{examp}
The {\bf equivalent program decision problem} asks if two given (numbers of) Turing machines describe the same function. That is, if Turing machine $T$ and $T'$ always give the same output no matter what the input. There is a morphism in $\Func$ called $Equiv\colon Nat \times Nat \lra Bool$ which is defined as follows
\be
  Equiv(y,y') = \left\{
      \begin{array}{ll}
        1 &:  \mbox{if Turing machine $y$ describes the same function as Turing machine $y'$} \\
        0 &:  \mbox{if Turing machine $y$ does not describe the same function as Turing machine $y'$. } \\
      \end{array}
              \right.
\ee
In order to show that $Equiv$ is not in $\CompFunc$ we show that we can reduce $Nonempty$ to $Equiv$ as follows: 
\be \xymatrix{Nat\ar[rr]^h \ar[ddr]_{Empty}&& Nat \times Nat \ar[ddl]^{Equiv}\\ \\& Bool.}\ee
Let $y_0$ be the number of a silly Turing machine that simply goes into an infinite loop for any input. Nothing is ever accepted or output. This machine clearly has a empty domain, i.e, $Empty(y_0)=1$. Now we shall use this Turing machine to describe a reduction from $Empty$ to $Equiv$. $h\colon Nat \lra Nat \times Nat$ is defined for Turing machine $y$ as $h(y)= (y, y_0)$. Notice that $Equiv(y,y_0)=1$ iff Turing machine $y$ performs the same function as the silly Turing machine $y_0$ iff  $Empty(y)=1$. But since we know that $Empty$ is not computable, we know that $Equiv$ is not computable.   
\end{examp}

\begin{examp}
The {\bf printing 42 problem} asks if a given (number of) a Turing machine has some input for which 42 is an output. There is a morphism in $\Func$ called $Print\colon Nat \lra Bool$ which is defined as follows
\be
  Print(y) = \left\{
      \begin{array}{ll}
        1 &:  \mbox{if there exists an input to Turing machine $y$ that outputs 42.} \\
        0 &:  \mbox{if there does not exist an input to Turing machine $y$ that outputs 42. } \\
      \end{array}
              \right.
\ee
Obviously the number 42 is not important to the problem. It is simply the answer to the ultimate question of life, the universe, and everything.  

We show that the halting problem reduces to the printing 42 problem as in 
\be \xymatrix{Nat\times Nat\ar[rr]^{h} \ar[ddr]_{Halt}&& Nat \ar[ddl]^{Print}\\ \\& Bool.}\ee
The total computable function $h$ is defined as follows for Turing machine $y$ and for input $x$. $h(x,y)=y'$ where $y'$ is the number of the Turing machine that performs the following task:

Turing machine $y'$: on input $w$
\begin{enumerate}
\item If $w\neq x$ reject. Stop.
\item If $w = x$ execute Turing machine $y$ on input $x$. If Turing machine $y$ accepts $x$, print ``42'' and accept. Stop.
\end{enumerate}

Notice that Turing machine $y'$ depends on $x$ and $y$. Also notice that function $h$ is easily seen to be totally computable. That means that a program can easily compute $y'$ if it is given $x$ and $y$. Now consider Turing machine $y'$. $Print(y')=1$ iff $Print(h(x,y))=1$ iff Turing machine $y$ accepts $x$ iff $Halt(x,y)=1$. But we already know that it is impossible to solve the halting problem. So it must be impossible to solve the printing problem.  
\end{examp}

There are many other decision problems that can be shown to be undecidable. In fact we will show that a computer cannot deal with the vast majority of properties of Turing machines. What type of properties are we talking about? First we are dealing with nontrivial properties. By this we we mean that there exists Turing machines that have the property and Turing machines that do not have the property. It is very easy to decide trivial properties (just always answer yes or always answer no.) We are also interested in semantic properties. By this we mean we are interested in properties about the function that the Turing machine produces. In other words, if Turing machine $y$ produces the same function as Turing machine $y'$ then both $y$ and $y'$ would both have a semantic property or both not have a semantic property. In contrast to a semantic property, a syntactical property about Turing machines is very easy to decide. For example, it is every to decide if a Turing machine has 100 rules or more. Or if a Turing machine uses less than 37 states. A Turing machine can be written to answer such questions.
 
\begin{theo}{\bf (Rice's theorem)} Any nontrivial, semantic property of Turing machines is undecidable. 
\end{theo}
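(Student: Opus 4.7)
The plan is to reduce the halting problem to $P$, generalising the constructions used in the $Nonempty$, $Equiv$, and $Print$ examples. Encoding the property as a morphism $P \colon Nat \lra Bool$ in $\Func$, \emph{semantic} means $P(y)$ depends only on the partial function that Turing machine $y$ computes, and \emph{nontrivial} means there exist numbers $y_{\mathrm{yes}}$ and $y_{\mathrm{no}}$ with $P(y_{\mathrm{yes}})=1$ and $P(y_{\mathrm{no}})=0$. The goal is to show $P \notin \TotCompFunc$.

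Let $y_0$ be the number of the silly Turing machine that loops forever on every input, as used in the $Equiv$ example; by semanticity $P(y_0)$ is determined. First I would reduce to the case $P(y_0)=0$: since $NOT\colon Bool \lra Bool$ is its own inverse in $\TotCompFunc$, the composite $P^c = NOT \circ P$ lies in $\TotCompFunc$ iff $P$ does, so if $P(y_0)=1$ I simply replace $P$ by $P^c$ (which is again nontrivial and semantic, with the roles of $y_{\mathrm{yes}}$ and $y_{\mathrm{no}}$ swapped). From here on assume $P(y_0)=0$ and fix some $y_{\mathrm{yes}}$ with $P(y_{\mathrm{yes}})=1$.

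Next I would build a total computable $h \colon Nat \times Nat \lra Nat$ witnessing a reduction $Halt \leq P$:
\be \xymatrix{Nat\times Nat\ar[rr]^{h} \ar[ddr]_{Halt}&& Nat \ar[ddl]^{P}\\ \\& Bool.}\ee
Given $(x,y)$, let $h(x,y)$ be the number of the Turing machine $T_{x,y}$ that on input $w$ first simulates Turing machine $y$ on input $x$ (ignoring $w$), and only after that simulation halts proceeds to simulate $y_{\mathrm{yes}}$ on $w$ and return whatever it returns. Building the rules of $T_{x,y}$ mechanically from the rules of $y$ and $y_{\mathrm{yes}}$ together with the fixed input $x$ is plainly totally computable, so $h \in \TotCompFunc$. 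If $Halt(x,y)=1$, the first phase of $T_{x,y}$ halts on every $w$ and the second phase runs, so $T_{x,y}$ computes the same partial function as $y_{\mathrm{yes}}$; by semanticity $P(h(x,y))=P(y_{\mathrm{yes}})=1$. If $Halt(x,y)=0$, the first phase never terminates, so $T_{x,y}$ is everywhere undefined and computes the same function as $y_0$; by semanticity $P(h(x,y))=P(y_0)=0$. Hence the triangle commutes.

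The only delicate point is arranging $T_{x,y}$ so that the ``simulate $y_{\mathrm{yes}}$'' phase is genuinely unreachable whenever $y$ does not halt on $x$ — this is the standard composition-of-Turing-machines manoeuvre already exercised in the $Nonempty$ and $Print$ examples, so I would invoke it rather than grind through the state construction. Once the reduction is in place, the usual argument closes the proof: if $P$ were in $\TotCompFunc$ then $P \circ h$ would be too, by closure of $\TotCompFunc$ under composition, forcing $Halt \in \TotCompFunc$ and contradicting Turing's theorem. Therefore $P \notin \TotCompFunc$, i.e., $P$ is undecidable.
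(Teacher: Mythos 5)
Your proposal is correct and follows essentially the same route as the paper: a reduction $Halt \leq P$ via a machine that first simulates $y$ on $x$ and, only upon halting, behaves like a machine possessing the property, with the everywhere-undefined machine $y_0$ normalised to the ``no'' side. The only difference is that you make the WLOG step $P(y_0)=0$ explicit by composing with $NOT$, which the paper merely remarks can be done.
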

\begin{proof}
Let $P$ be a nontrivial, semantic property. There will be a morphism is $\Func$ that decides property $P$, i.e., $P-decider \colon Nat \lra Bool$. $P-decider(y)=1$  iff Turing machine $y$ has property $y$. We show that the halting problem is reducible to the problem of deciding the $P$ property with the map $h_P$ 
\be \xymatrix{Nat\times Nat\ar[rr]^{h_P} \ar[ddr]_{Halt}&& Nat \ar[ddl]^{P-decider}\\ \\& Bool.}\ee
Let us say that $y_0$ is the number of the silly Turing machine that always rejects every input. Either Turing machine $y_0$ has property $P$ or does not have property $P$. Assume that it does not (the proof can easily be modified if Turing machine does have property $P$). Since $P$ is nontrivial there exists a Turing machine $y_1$ that does have property $P$. So we have $P-decider(y_0)=0$ and $P-decider(y_1)=1$.
We define $h_P(x,y)=y'$ where $y'$ is the number of the following Turing machine: 

Turing machine $y'$: on input $w$
\begin{enumerate}
\item Simulate Turing machine $y$ on input $x$. 
\begin{enumerate}
\item If it halts and rejects, then reject. Stop. 
\item If it accepts go to step 2.
\end{enumerate}
\item Simulate Turing machine $y_1$ on $w$. 
\end{enumerate}

Notice that if Turing machine $y$ on input $x$ rejects then Turing machine $y'$ will always reject and will be equivalent to Turing machine $y_0$.  If Turing machine $y$ on input $x$ goes into an infinite loop, then $w$ will not be accepted just like Turing machine $y_0$. In contrast, if Turing machine $y$ on input $x$ accepts, then Turing machine $y'$ will act just like Turing machine $y_1$.

Let us analyze Turing machine $y'$. $P-decider(y')=1$ iff $P-decider(h_P(x,y))=1$ iff 	Turing machine $h_P(x,y)=y'$ acts like Turing machine $y_1$ 
iff $Halt(x,y)=1$. But we know that the halting problem cannot be solved. We conclude that $P-decider$ is not computable. 
\end{proof}

Here is just a small sample of the nontrivial semantical properties that Rice's theorem shows are not decidable:
\begin{itemize}
\item Tell if a Turing machine has a finite domain.
\item Tell if a Turing machine has an infinite domain.
\item Tell if a Turing machine accepts a particular input.
\item Tell if a Turing machine accepts all inputs. 
\end{itemize} 

\vspace{.5in}
G{\"o}del's Incompleteness Theorem is one of the most important theorems of 20th century mathematics.  A version of the theorem is a simple consequence of the undecidability of the halting problem. It would be criminal to be so close to it and not state and prove it. 

First some preliminaries. We say a logical system is {\bf complete} if every statement that is true (theorem) has a proof within the system. In contrast a logical system is {\bf incomplete} if there exists a statement that is true for which there is no proof within the system.   
\begin{theo}{\bf (G{\"o}del's Incompleteness Theorem.)} For any consistent logical system that is powerful enough to deal with basic arithmetic, there are statements that are true but unprovable. That is, the logical system is incomplete. 
\end{theo}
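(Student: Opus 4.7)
The plan is to derive incompleteness as a direct corollary of Turing's undecidability of the halting problem. Suppose for contradiction that $L$ is both consistent and complete, and is powerful enough that for every pair $(x,y) \in Nat \times Nat$ there is a sentence $\phi_{x,y}$ in $L$ asserting ``Turing machine $y$ halts on input $x$.'' The phrase ``powerful enough to deal with basic arithmetic'' is exactly what licenses the existence of such $\phi_{x,y}$: Turing machine computations can be encoded as arithmetical relations, so any system that can talk about addition and multiplication of natural numbers (e.g.\ Peano arithmetic) can express halting.

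The key step is to build, from the assumed completeness and consistency, a Turing machine that decides $Halt$. The machine on input $(x,y)$ first constructs the sentence $\phi_{x,y}$ --- a total computable syntactic operation. It then dovetails a search over all finite strings $\pi$, checking whether $\pi$ is a valid $L$-proof of $\phi_{x,y}$ or of $\neg \phi_{x,y}$. By completeness of $L$, at least one such proof exists, so the search terminates on every input, making the procedure total. By consistency of $L$, only the correct one of $\phi_{x,y}$, $\neg \phi_{x,y}$ is provable, so the machine outputs $1$ if it finds a proof of $\phi_{x,y}$ and $0$ otherwise. This places $Halt$ in $\TotCompFunc$, contradicting Turing's theorem proved above.

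Therefore $L$ cannot be both consistent and complete; assuming consistency, $L$ must be incomplete. To extract a true but unprovable statement, pick any $(x_0, y_0)$ for which neither $\phi_{x_0,y_0}$ nor $\neg\phi_{x_0,y_0}$ is provable (such a pair must exist by the failure of completeness). Turing machine $y_0$ either halts on $x_0$ or does not, and whichever of these two alternatives obtains is a true sentence of $L$ that has no proof inside $L$.

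The main obstacle is making the ``powerful enough'' hypothesis precise enough to guarantee two things simultaneously: first, that halting predicates are expressible in $L$; and second, that ``$\pi$ is a valid proof of $\psi$ in $L$'' is itself a decidable syntactic relation so that the dovetailing search is a bona fide morphism in $\TotCompFunc$. Both follow from the standard assumption that $L$ is recursively axiomatized and that its language contains a fragment of arithmetic strong enough to represent primitive recursive functions, but this requires unpacking the informal ``basic arithmetic'' clause that the theorem statement leaves implicit.
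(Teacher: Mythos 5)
Your proposal is correct and follows essentially the same route as the paper: both derive incompleteness from the undecidability of $Halt$ by expressing halting as an arithmetic sentence (the paper's $\exists t\, HALT(\ulcorner y\urcorner,\ulcorner x\urcorner,t)$ is your $\phi_{x,y}$) and observing that a complete, consistent system would make the proof-search procedure (the paper's $SuperEval$, your dovetailing search) a total decider for $Halt$. Your version is somewhat more careful about where consistency and recursive axiomatizability are used, but the underlying argument is identical.
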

\begin{proof}

We will not be going through all the details. However much of the proof has been set up already when we discussed the functor $L$ from the category of Turing machines to the category of logical formulas. Remember that for a Turing machine $T$ and an input $w$ there is a logical formulas $L(T)[w]$ that describes a potential computation of Turing machine $T$ with input $w$. We can use $L(T)[w]$  to formulate a logical formula $HALT(T,w,t)$ which is true exactly when Turing machine $T$ on input $w$ halts in time $t$ or less. This is a logical formula that is either true or false. Whether or not a computation halts then depends on whether or not the logical formula $\exists t HALT(T,w,t)$ is true or not. 

 Because we are dealing with an exact logical system where the axioms are clear and the method of proving theorems are exactly stated, it is possible for a computer to tell when a string is a formal proof of the logical system. This comes from the fact that if the logical system is able to perform basic arithmatic, statements can be encoded as numbers and dealt with. So it is possible (though extremely inefficient) to produce all strings in lexicographical order and for each string have a computer tell if the string is a formal proof of a statement or the negation of a statement. This amounts to saying that there is a computable function $SuperEval \colon String \lra Bool$ that evaluates a logical formula $\phi$ and tells if it or its negation is provably true. It is defined as 
\be
  SuperEval(\phi) = \left\{
      \begin{array}{ll}
        1 &:  \mbox{if there exists a proof that $\phi$ is true} \\
        0 & : \mbox{if there exists a proof that $\neg \phi$ is true. } \\
      \end{array}
              \right.
\ee
The main question is if this computable function is total or not. In a complete logical system, $SuperEval$ is total. In contrast, in an incomplete logical system, there exists statements $\phi$ such that neither $\phi$ nor $\neg \phi$ have proofs and hence $SuperEval$ is not total.  

We will use a reduction from $Halt$ to show that $SuperEval$ is not total and the system is incomplete. There is a total computable function $h \colon Nat \times Nat \lra String$ that makes the following triangle commute 
\be \xymatrix{Nat\times Nat\ar[rr]^h \ar[ddr]_{Halt}&& String \ar[ddl]^{SuperEval}\\ \\& Bool.}\ee
$h$ is defined as 
\be h(x,y)=\exists t HALT(\ulcorner y \urcorner, \ulcorner x \urcorner, t) \ee
where $\ulcorner y \urcorner$  is the Turing machine described by the number $y$ and and $\ulcorner x \urcorner$ is the input string described by the number $x$. Let us examine this function carefully. $ h(x,y)$ is true iff $\exists t HALT(\ulcorner y \urcorner, \ulcorner x \urcorner, t)$ is true. If we were able to prove that $\exists t HALT(\ulcorner y \urcorner, \ulcorner x \urcorner, t)$ or its negation, then we would have a way of deciding the halting problem. We know that is not possible. So it must be the case that there is neither a proof of $\exists t HALT(\ulcorner y \urcorner, \ulcorner x \urcorner, t) $ nor a proof of $\neg \exists t HALT(\ulcorner y \urcorner, \ulcorner x \urcorner, t) $ in the logical system.  For some $y$ and $x$, 
\be SuperEval( \exists t HALT(\ulcorner y \urcorner, \ulcorner x \urcorner, t)) \ee
is undefined. This shows that mathematical  truth is more than the notion of proof. 
\end{proof}

\subsection{Classifying undecidable  problems.}

What is beyond $\CompFunc$? We have shown that there are morphisms that are in $\Func$ and not in $\CompFunc$. While we have given a few examples of such functions, it is important to realize that the vast majority of morphisms in $\Func$ are not in $\CompFunc$. This can easily be seen with a little counting argument. Since every function in $\CompFunc$ can be mimicked by at least one Turing machine and there are only a countable infinite number of Turing machines, we see that there are only a countably infinite number of morphisms in $\CompFunc$. In contrast, there is an uncountably infinite number of morphisms in $\Func$. While we tend to think mostly of what is in $\CompFunc$, there is vastly more morphisms in $\Func$ that are not in $\CompFunc$. 

Is there a way to characterize and classify the morphisms in $\Func$? This was a question that, again, goes back to Alan Turing and he gave an ingenious answer. 

Let $f\colon Seq \lra Seq'$ be any function in $\Func$ (One should think of $f$ as {\it not} being in $\CompFunc$.)
An $f$ {\bf oracle Turing machine} is a Turing machine  that can ``magically'' use $f$ in its computation. In detail, this Turing machine has an extra ``query tape'' and an extra ``query state.'' While the Turing machine is executing 
it can place some information, $x$, on the query tape. Once the question is in place, the Turing machine can go to the special query state. At that time click the oracle will magically  erase $x$ from the query tape and put $f(x)$ on the tape. The computation then continues on its merry way with this new piece of information. 

For any morphism $f$ in $\Func$ we formulate the category of $f$ oracle Turing machines which we denote $\Turing[f]$. (The notation should remind a mathematician of taking a ring and adding in an extra variable to get a larger ring.) The objects of  $\Turing[f]$ are the natural numbers. The morphisms are $f$ oracle Turing machines. A regular Turing machine can be thought of as an oracle Turing machine where the query tape and the query state is never used. This means that there is an inclusion functor $\Turing \longhookrightarrow \Turing[f]$. We can also discuss what functions can be mimicked by a Turing machine that has access to the $f$ oracle. This gives us the category $\CompFunc[f]$. If a function does not use the oracle, then it is in $\CompFunc$. Hence there is an inclusion $\CompFunc \longhookrightarrow \CompFunc[f]$. Notice that if $f$ is computable by itself then $\CompFunc[f]$ is the same thing as $\CompFunc$ because, rather than using the oracle, we can just put in a subroutine that computes the function. Every morphism in $\CompFunc[f]$ is still a morphism in $\Func$. We can summarize all these categories with this diagram.  
\be \xymatrix{
&&\Turing\ar@{->>}[dd]_D \ar@{^{(}->}[r]& \Turing[f]\ar@{->>}[dd]\\
\\
&&\CompFunc  \ar@{^{(}->}[r]& \CompFunc[f] \ar@{^{(}->}[r]&\Func.\\
}\ee

Rather than just taking any arbitrary non-computable $f$ for an oracle, let us take $Halt\colon Nat\times Nat \lra Bool$. This will result in the category $\CompFunc[Halt]$ which consists of all the functions that are computable if a computer has access to the $Halt$ function. This is more than $\CompFunc$ but not all of $\Func$. We can ask whether or not a Turing machine with a $Halt$ oracle will halt. These Turing machines can also be enumerated and we can make a new halt function \be \widehat{Halt}\colon Nat \times Nat \lra Bool\ee which is defined as 
\be
 \widehat{Halt}(x, y) = \left\{
      \begin{array}{ll}
        1 &:  \mbox{if $Halt$ oracle Turing machine $y$ on input $x$ halts.} \\
        0 &: \mbox{if $Halt$ oracle Turing machine $y$ on input $x$ does not halt. } \\
      \end{array}
              \right.
\ee

It is not hard to show that the $\widehat{Halt}$ is not computable even with the $Halt$ oracle. That is, $\widehat{Halt}$ is not in $\TotCompFunc[Halt]$. We can use $\widehat{Halt}$ as a new oracle and construct $\CompFunc[\widehat{Halt}]$. This process of going from one category of functions to a larger category of functions is called the {\bf jump operation}. We can continue this process again and again. We have the following infinite sequence of categories:
\be\CompFunc \longhookrightarrow \CompFunc[Halt] \longhookrightarrow \CompFunc[\widehat{Halt}] \ee
\be  \longhookrightarrow \CompFunc[\widehat{\widehat{Halt}}] \longhookrightarrow \cdots \longhookrightarrow \Func.
\ee
This gives us a whole lattice of categories where $\CompFunc$ is the bottom and $\Func$ is the top. This is a classification of the uncomputable functions in $\Func$.  We know a lot about what we cannot compute. This beautiful structure is extensively studied in books like \cite{soare} and \cite{rogers}.  

%
%
\section{Complexity Theory}
While computability theory deals with what can and cannot be computed, complexity theory deals with what can and cannot be computed {\it efficiently}.
Here we do not ask what functions are computable. Rather what computable functions can be computed with a reasonable amount of resources. We also classify different types of computable functions by their different levels of efficiency or complexity. 

Historically, complexity theory only deals with total computable functions. Our entire discussion will only deal with the following functor from The Big Picture.  
\be \xymatrix{
\TotTuring\ar@{->>}[dd]_D\\ 
\\
\TotCompFunc 
}\ee

\subsection{Measuring Complexity}

When we discuss a function using an efficient amount of resources we usually mean number of steps to compute the function.  This corresponds to the amount of time it takes for the computation to complete. The more steps needed to complete the computation, the more time will be needed. Researchers have also been interested in how much space or other resources are required to compute the function.  Let us focus on the amount of time a computation needs. Usually the amount of time needed depends on two things: (i) the size of the input. One expects that a large input would demand a lot of time and a small input can be done rather quickly. And (ii) the state of the input. For example, if one is interested in sorting data, then usually, data that is already almost sorted, does not require a lot of computing time to get the data totally in order. In contrast, if the data is totally disordered, more time is needed. We will be interested in the worst-case scenario, that is the worst possible state of the data.  

In order to formalize the notion that the amount of operations and time needed is dependent on the size of the input, we associate to every total  Turing machine a function from the natural numbers, $\n,$ to the set of non-negative real numbers $\r^\ast$. The function $f\colon \n \lra \r^\ast$ will tell how many operations are needed for a computation of the Turing machine. The $\n$ corresponds to possible sizes of the input and the $\r^\ast$ corresponds to the possible number of operations needed. (While only whole numbers are used to describe how many operations are required, since some functions will use operations like $log$ we employ the codomain  $\r^\ast$.)  If $n$ is the size of the input, then $f(n)$ is the amount of operations needed in the worst-case scenario. $f(n)$ is the maximum amount of resources needed for inputs of size $n$. 

All this is for one Turing machine that implements the function. But there might be many Turing machines that implement the same function. We are going to need to consider the best-case scenario, i.e., the most efficient Turing machine that implements a certain function. So to every total computable function we will associate a function $f\colon \n \lra \r^\ast$  which is calculated by looking at the best possible Turing machine that solves that problem and looking at the worst possible data that can be input to that machine. 

The set of all functions $\n \lra \r^\ast$ form an ordered monoid $Hom_\Set(\n,\r^\ast)$. The monoid operation, $+$, is inherited from $\r^\ast$. The unit is the function that always outputs zero. The order is also inherited from $\r^\ast$. Essentially $f \leq g$ if and only if $f(n) \leq g(n)$ for all $n \in \n$. We shall think of $Hom_\Set(\n,\r^\ast)$ as a one-object category.

\begin{tech} 
When complexity theorists compare two Turing machines they are not really looking at the ordered monoid $Hom_\Set(\n,\r^\ast)$. Rather, they want to consider two functions to be the same if they only differ by a small amount. They also want to ignore what happens when the input sizes are small. Researchers deal with this by working with a quotient ordered monoid $Hom_{\Set}(\n, \r^\ast)/\sim$ defined as follows: 
$f\colon \n \lra \r^\ast$ is considered the same as $g\colon \n \lra \r^\ast$, i.e.,  
$f \sim g$ if and only if \be0 < \lim_{n\to\infty}\f{f(n)}{g(n)}< \infty.\ee It is not hard to see that this relation is an equivalence relation. In fact, it is a congruence. 

Rather then using the usual equivalence class notation, i.e.,  $f\in [g]$, complexity theorists use the notation $f \in \Theta(g)$ or $f =\Theta(g)$. Also, if $f \leq g$ in the quotient ordered monoid, we write $f=O(g)$. 

Although this quotient plays a prominent role in complexity theory, we do not lose the spirit of the subject by ignoring it.  
\end{tech}

The categorical details of how we come to a functor that measures complexity is not simple. Feel free to skip to the punchline in Diagram \ref{diag:measurecomplexity} on page \pageref{diag:measurecomplexity}.

We need to look at all the Turing machines with all their appropriate input which we construct from the pullback  
\be \xymatrix{\InpTM \ar[dd]\ar[rr]&& \coprod_m\coprod_nHom_{\TotTuring}(m,n)\ar[dd]
\\
\\
\coprod_{m=0}^{\infty} (\Sigma^*)^m \ar[rr] && d(\n).
}\ee
Let us explain this pullback. The lower right entry is the discrete set of natural numbers. The upper right corner is the set (as opposed to the category) of all Turing machines. The right vertical functor takes every Turing machine to the number of string inputs it demands. The lower left is the set of all possible input strings. The $m$-tuples of strings for all $m$ is the set $\coprod_{m=0}^\infty (\Sigma^*)^m$.  The bottom functor takes an $m$-tuple of strings and simply outputs $m$.  The pullback of these two functors is the set of pairs of Turing machines and the inputs for those Turing machines. We call this discrete category $\InpTM$.
 
We must be able to measure the size of the input to a Turing machine. Not only do we need to measure the number of strings, but we also need to know the sum total of the lengths of all the strings. There is a length functor $|~~| \colon \Sigma^* \lra \n$ that takes a string and gives the length of the string. This functor can be extended to an $m$-tuple of strings  $|~~| \colon (\Sigma^*)^m \lra \n$ in the obvious way: if $x_1, x_2, x_3, \ldots , x_m$ is an $m$-tuple of strings, then
\be | x_1, x_2, x_3, \ldots , x_m| = |x_1|+ |x_2|+ |x_3|+ \cdots + |x_m|.\ee
This can be further extended to $m$-tuples for any $m$. This gives us the functor $|~~| \colon \coprod_m (\Sigma^*)^m \lra \n$.

Given a Turing machine and its input there are several resources we can measure. We can measure the number of computational steps that this Turing machine with that input will demand to complete the computation. We call this $Time$. We can measure the number of boxes needed in the work tape to complete the computation. We call this resource $Space$. There are still other resources studied. 
\be \xymatrix{\InpTM \ar@<.2 in>[rr]^{Time} \ar@<.05 in>[rr]_{Space} \ar@<-.3 in>[rr]_{others}^\vdots & &\r^*}\ee

Let us examine important subsets of $\InpTM$. For Turing machine $T$ that demands $m$ inputs there is a subset $\la T, (\Sigma^*)^m\ra$ of $\InpTM$ that consists of all the possible inputs to that Turing machine. Each of the measures of resources restrict to this set.  From such a set, and for any measure of resource, say $Time$, there is a function to $\r^*$ and a length function to $\n$, i.e.,
\be \xymatrix{\n &&\la T, (\Sigma^*)^m\ra\ar[rr]^{Time}\ar[ll]_{|~~~|}&&\r^*.}\ee 
These two functions are used to define a function 
\be\max_{T}(Time) \colon \n \lra \r^*.\ee
This is the function that gives the resources with regard to all input $w$. In symbols this is 
\be \max_{T}(Time) = \lim_{\substack{g\colon \n \lra \r^*\\ w\in (\Sigma^*)^m\\ g(|w|)=Time(T,w) }} g\colon \n \lra \r^*\ee

\begin{tech} For those who know the language of Kan extension, this colimit is nothing more than the left Kan extension
\be \xymatrix{
\n \ar[rr]^{\max_{T}(Time)}&&\r^*
\\
\\
& \la T, (\Sigma^*)^m\ra \ar[luu]^{|~~|}\ar[ruu]_{Time}
}\ee
\end{tech}

We are still not done. We have found the function that describes the amount of time needed for a particular Turing machine $T$. It remains to find the function that describes resources needed by searching through all the Turing machines that implement some computable function.  Let $f\colon Seq \lra Seq'$ be some total computable function, i.e.,  in $\TotCompFunc$. Consider the preimage set $D^{-1}(f)$ of all Turing machines that implement $f$.  We define the function that measures complexity of computable functions as follows:
\be \mu_{D,Time}(f)\  =  \min_{T \in D^{-1}(f)} \max_{T}(Time)\ee
This gives us the desired functor: 
\be \label{diag:measurecomplexity}
\xymatrix{\TotCompFunc \ar[rr]^{\mu_{D,Time}}&&
Hom_\Set(\n,\r^*) }
\ee
Notice the functor $D$ is used in the notation. If we use the resource $Space$, we will get the functor that measures space $\mu_{D, Space}$.

The functors  $\mu_{D, Time}$ and $\mu_{D, Space}$  are not just set functions. Sequential processes in $\TotCompFunc$ go to the sum of functions in $Hom_\Set(\n,\r^*)$. This means that if two functions are performed one after the other, then the amount of resources needed will be added. 

In order to classify the computable functions, we look at various submonoids of $Hom_\Set(\n,\r^*)$.  For example, we will look at the submonoids $\mathbf{Poly}$ of all polynomial functions, $\mathbf{Const}$ of all constant functions, $\mathbf{Exp}$ of all exponential functions, $\mathbf{Log}$ of all logarithm functions, etc. These monoids are included in each other as
\be \mathbf{Const} \longhookrightarrow \mathbf{Log} \longhookrightarrow  \mathbf{Poly} \longhookrightarrow \mathbf{Exp} \longhookrightarrow  Hom_\Set(\n,\r^*)\ee

 For every such submonoid, we can take the following pullback and get those computable functions whose complexity is within that submonoids. For example, for polynomials, we have the pullback:
\be \label{diag:pbcomplexity}
\xymatrix{
\Poly_{D, Time}\ar@{^{(}->}[rr]\ar[dd]&&\TotCompFunc \ar[dd]^{\mu_{D,Time}}\\
\\
\mathbf{Poly}\ar@{^{(}->}[rr] &&Hom_\Set(\n,\r^*) }
\ee
The category $\Poly_{D, Time}$ is the collection or {\bf complexity class} of all computable functions that can be computed in a  polynomial amount of time. Another complexity class is $\Exp_{D, Space}$, the collection of all computable functions that can be computed using an exponential amount of space. The notation is self evident.

At this time, we would like to bring in a more advanced notion of a Turing machine. Our entire discussion has been about {\bf deterministic Turing machines}. These are machines that, at every single time click, do exactly one operation. There are souped-up Turing machines called {\bf nondeterministic Turing machines} that at every time click might do one of a set of possible operations. 
A nondeterministic Turing machine could be in state $q_{32}$ and see various symbols on its tapes, it has the options of doing one of a set of operations on the tapes. In analogy with Equation \ref{diag:TurDef}, we can write this as follows:
 \begin{align} 
 \delta(q_{32}, x_1,x_2, \ldots,x_n)  = &\{(q_{51}, y_1,y_2, \ldots,y_n, L, R, R, \ldots, L)\\
					& (q_{13}, y'_1,x_2, \ldots,y'_n, R, R, \ldots, L) ,\\
					& (q_{51}, y^2_1,x^2_2, \ldots,x^2_n, L, R, \ldots R). \}      
\end{align}
In this example the Turing machine has three different options of operations to perform. A computation begins when input is placed on the input tape. At every time click the Turing machine can choose one of the possible options. We say that a computation occurs when there is a sequence of choices that leads to an accepting state. The first such sequence of choices gives us the computation.

There is a category of nondeterministic  Turing machines, $\NTotTuring$,  The objects are the same natural numbers as with $\TotTuring$ and the set of morphisms from $m$ to $n$ is the set of nondeterministic Turing machines with $m$ input tapes and $n$ output tapes. Analogous to the functor $D\colon \TotTuring\lra \TotCompFunc$ there is a functor $N\colon \NTotTuring\lra \TotCompFunc$ that takes every Turing machine to the function it computes. Every deterministic Turing machine can be thought of as a special type of nondeterministic Turing machine where the set of options is a singleton set. There is an obvious inclusion functor from $\TotTuring$ to $\NTotTuring$. There exists a functor $F \colon \NTotTuring \lra \TotTuring$ that takes every nondeterministic Turing machine to a determinstic Turing machine that performs the same computable function. The deterministic Turing machine works by trying every possible path of the nondeterministic Turing machine. We summarize with the following:  
\be \xymatrix{
\TotTuring\ar@{->>}[ddr]_D\ar@{^{(}->}[rr]^{Inc}&&\NTotTuring\ar@{->>}[ddl]^N\ar@/_.3in/[ll]_F\\ 
\\
&\TotCompFunc 
}\ee
It should be noted that while $F \circ Inc = Id_\TotTuring$, it is not necessarily true that $Inc \circ F = Id_\NTotTuring$. However, it is true that $N \circ Inc \circ F= N$. This means that for every nondeterministic Turing machine there is a deterministic Turing machine that performs the same computable function. 
That is, every nondeterministic Turing machine there is an equivalent deterministic Turing machine.

We discuss measuring resources by replacing the functor $D$ with the functor $N$ in the definition of Diagram  \ref{diag:measurecomplexity} to get  
\be 
\xymatrix{\TotCompFunc \ar[rr]^{\mu_{N,Time}}&&
Hom_\Set(\n,\r^*) }.
\ee
Pullbacks like \ref{diag:pbcomplexity} can be used to form categories like $\Poly_{N,Time}$ or $\Exp_{N,Space}$ etc. Since every computable function that can be performed by a deterministic Turing machine in polynomial time can also be performed by nondeterministic Turing machine in polynomial time, there is an induced  inclusion functor from $\Poly_{D,Time}$ into the category of $\Poly_{N,Time}$ as in 
\be 
\xymatrix{\Poly_{D, Time}\ar@{^{(}->}[drr]^\eu \ar@{^{(}->}[rrrrrr]\ar[rrddd]&& && && \TotCompFunc \ar[dddll]^{\mu_{D,Time}}\ar[dll]_=\\
&&\Poly_{N, Time}\ar@{^{(}->}[rr]\ar[dd]&&\TotCompFunc \ar[dd]^{\mu_{N,Time}}\\
\\
&&\mathbf{Poly}\ar@{^{(}->}[rr] &&Hom_\Set(\n,\r^*) 
}
\ee

Similarly other submonoides of $Hom_\Set(\n,\r^\ast)$ induce other inclusions as in 

\be 
\xymatrix{
\Log_{D,Time} \ar@{^{(}->}[ddd]\ar@{^{(}->}[drr]^\eu\ar[rrrrd]
\\
&&\Poly_{D, Time}\ar@{^{(}->}[rr]\ar[dd]&&\TotCompFunc \ar[dd]^{\mu_{D,Time}}\\
\\
\mathbf{Log}\ar@{^{(}->}[rr]&&\mathbf{Poly}\ar@{^{(}->}[rr] &&Hom_\Set(\n,\r^*) }
\ee

Diagram \ref{diag:complexityclasses} shows how all these various subcategories are related. 

\begin{figure}[h]
\be \xymatrix{
\Exp_{D,Time}\ar@{^{(}->}[rr]&&\Exp_{N,Time}\\
\\
\Poly_{D,Time}\ar@{^{(}->}[uu]\ar@{^{(}->}[rr]&&\Poly_{N,Time}\ar@{^{(}->}[uu]\\
\\
\Log_{D,Time}\ar@{^{(}->}[uu]\ar@{^{(}->}[rr]&&\Log_{N,Time}\ar@{^{(}->}[uu]\\
\\
\Const_{D,Time}\ar@{^{(}->}[uu]\ar@{^{(}->}[rr]&&\Const_{N,Time}\ar@{^{(}->}[uu] 
}\ee
\caption{Some subcategories (complexity classes) of $\TotCompFunc.$}\label{diag:complexityclasses}
\end{figure}

We only dealt with deterministic and nondeterministic Turing machines. 
There are many other types of Turing machines that we will not discuss. There are probabilistic Turing machines, quantum Turing machines, alternating Turing machine, etc. Each with its own set of rules and with its own complexity classes. The relationship between all these complexity classes are a major topic within complexity theory. 

\subsection{Decision problems} 

As in  computability theory, there is a special interest in decision problems. In this context, decision problems are total computable functions whose codomain is $Bool$.
As we saw in computability theory, there is a way of comparing decision problems. In complexity theory, we are interested in special types of reductions from one decision problem to another.
\begin{defi}
Let $f\colon Seq \lra Bool$ and $g\colon Seq' \lra Bool$ be two decision functions in $\TotCompFunc$. We say that $f$ is {\bf polynomial reducible} to $g$ if there is a $h\colon Seq \lra Seq'$ in $\Poly_{D,Time}$ such that 
\be \label{diag:polyreduc}\xymatrix{Seq\ar[rr]^h\ar[ddr]_f&& Seq' \ar[ddl]^g\\ \\& Bool.}\ee
We write this as $f \leq_p g$. If we further have that $g \leq_p f$ then we write $f\equiv_p g$ and say they are in the same {\bf complexity class}. 
\end{defi}

We form the category of decision problems and polynomial reductions. Consider the functors 
\be \xymatrix{\Poly_{D,Time}\ar@{^{(}->}[rr]^{Inc}&&\TotCompFunc && \one \ar[ll]_{Const_{Bool}}}\ee 
where the left functor is an inclusion functor and $Const_{Bool}$ takes the single object in $\one$ to the type $Bool$. Now consider the comma category $(Inc,Const_{Bool})$. The objects of this category are computable decision problems and the morphisms are polynomial reductions from one decision problem to another. 

There are two subcategories of $\TotCompFunc$ that are of interest: $\Poly_{D,Time}$  and $\Poly_{N,Time}$. These are all deterministic polynomial computable functions and all nondeterminisitic polynomial functions, respectively. They sit in the diagram
\be\label{diag:commacats} \xymatrix{&&\TotCompFunc 
\\
\\
\Poly_{D,Time}\ar@{^{(}->}[rr]^{DInc} \ar@{^{(}->}[rruu]^{Inc}\ar@{^{(}->}[rrdd]_{Id}&& \Poly_{N,Time}\ar@{^{(}->}[uu]&&\one \ar[ll]_{Const_{Bool}} \ar[uull]_{Const_{Bool}}\ar[ddll]^{Const_{Bool}}
\\
\\
&&\Poly_{D,Time}\ar@{^{(}->}[uu]
 }\ee  
The comma category $(DIinc, Const_{Bool})$ which consists of nondeterministic polynomial decision problems and (deterministic) polynomial reductions is called the complexity class $\NP$. The comma category $(Id, Const_{Bool})$ which consists of deterministic polynomial decision problems and (deterministic) polynomial reductions is called the complexity class $\PT$. The inclusion $\Poly_{D,Time} \longhookrightarrow \Poly_{N,Time}$ induces the inclusion $\PT \longhookrightarrow \NP$.

The most prominent open problem in theoretical computer science is the $\PT ~=?~ \NP$ question. While it is known that $\PT$ is a subcategory of $\NP$, it remains an open question to tell if these categories are really the same category. In other words, is there a morphism in $\NP$ that is not in $\PT$ or is every morphism in $\NP$ also in $\PT$. Alas, this question will not be answered in this mini-course. 

The notion of polynomial reduction is very important. In Diagram \ref{diag:polyreduc}, since $h$ is in $\Poly_{D,Time}$ we have that if $g$ is also in $\Poly_{D,Time}$ then by composition, so is $f$. That is, if $g$ is in $\PT$, then $f$ is in $\PT$. The contrapositive of this statement is more interesting: If $f$ is not in $\Poly_{D,Time}$, then neither is $g$. That is, if $f$ is not in $\PT$, then neither is $g$ in $\PT$. 

A  terminal object $t$ in a category is an object such that for any object $a$ there is exactly one morphism $a \lra t$. Define a {\bf weak terminal object} $w$ in a category to be an object such that for every object $a$ in the category there is {\it at least one} morphism $a\lra w$. Consider the full subcategory of $\NP$ of all weak terminal objects. The weak terminal objects are called {\bf NP-Complete problems} and the full subcategory of all of them is $\NPComplete$. These are the nondeterministic polynomial decision problems such that every nondeterministic polynomial decision problem  polynomial reduces to it. There might be more than one reduction. We have the inclusion of categories $ \NPComplete \longhookrightarrow \NP.$  
NP-Complete problems are very important in complexity theory. They are central to the $\PT= \NP$ question. If one shows that any particular NP-Complete problem can be solved or decided by a polynomial Turing machine then all the morphisms in $\NP$ can be shown to be in $\PT$ and $\PT=\NP$. In contrast, if we can find one morphism in $\NP$ that does not have a polynomial Turing machine, then we can show that $\PT\neq \NP$. 

Given a weak terminal object $w$, any map $h\colon w \lra w'$ insures that the object $w'$ is also a weak terminal object. In terms of NP-Complete problems, this means that if $f$ is an NP-Complete problem and there exists a polynomial reduction from $f$ to $g$, then $g$ is also an NP-Complete problem. So to find a cadre of NP-Complete problems, we have to find a single one first. Logic gives us this example.     

The {\bf Satisfiabilty problem} accepts a Boolean formula and asks if there is a way to assign values to the variables that make the formula true. Can the logical formula be satisfied? The usual way this is done is to fill out a truth table of the formula and see if there is any ``true'' in the final column. This describes a computable morphism $SAT \colon String \lra Bool$ in $\TotCompFunc$.

We would like to show that $SAT$ is a NP-Complete problem. Let us emphasize what this means. If $SAT$ is NP-complete then {\it every} NP problem reduces to it. Over the past several decades, researchers have described thousands of NP problems. There are still thousands more to be described in the future. How are we to show that everyone of these problems can be reduced to $SAT$?  

\begin{theo}{\bf (The Cook-Levin Theorem.)} 
$SAT\colon String \lra Bool$ is a weak terminal object in $\NP$. That is, $SAT$ is an NP-Complete problem.
\end{theo}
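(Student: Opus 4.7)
The plan is to establish two facts: first that $SAT$ belongs to $\NP$, and second that every object of $\NP$ polynomial-reduces to $SAT$. The first is immediate from the definitions: a nondeterministic polynomial-time Turing machine can, on input a Boolean formula $\phi$ of length $n$, nondeterministically guess an assignment to at most $n$ variables and then deterministically evaluate $\phi$ under that assignment in time polynomial in $n$. Hence $SAT \in \Poly_{N,Time}$, so it is already an object of $\NP$.

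For the second fact, I would re-use the symmetric monoidal functor $L\colon \Turing \lra \Logic$ built earlier, adapted to nondeterminism. Let $f\colon Seq \lra Bool$ be an arbitrary object of $\NP$; then there is a nondeterministic Turing machine $T$ with $N(T)=f$ whose runtime is bounded by some polynomial $p$. Define the reduction $h_T\colon Seq \lra String$ by sending an input $x$ to the conjunction of the clauses of $L(T)[x]$, but with the variables $C^z(t,i,j,k)$, $P^z(t,i,j)$ and $Q(t,q)$ restricted to the ranges $t \leq p(|x|)$ and $j \leq p(|x|)$, with each transition clause replaced by a disjunction over the finitely many nondeterministic options of $\delta$, and with a final conjunct asserting that $Q(t,q_{\text{acc}})$ holds for some time $t \leq p(|x|)$. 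A satisfying assignment to $h_T(x)$ is then exactly the data of a valid accepting computation history of $T$ on $x$, so the triangle
\be \xymatrix{Seq\ar[rr]^{h_T}\ar[ddr]_f&& String \ar[ddl]^{SAT}\\ \\& Bool}\ee
commutes.

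The main obstacle, and the one demanding genuine work rather than categorical abstraction, is to verify that $h_T$ itself lies in $\Poly_{D,Time}$, since otherwise the ``reduction'' has no content. I would count the propositional variables, obtaining $O(p(|x|)^2)$ of them since $t$, $i$ and $j$ all range over sets of size at most $p(|x|)$ while the alphabet and state set depend only on $T$ and not on $x$, and then bound the sizes of each family of clauses in turn — well-formedness of contents and pointers, uniqueness of state, the initial configuration, the (now disjunctive) transition compliance clauses, and the acceptance clause — showing that each family can be emitted by a deterministic Turing machine in time polynomial in $|x|$. This is routine but tedious bookkeeping; once completed, it exhibits $f \leq_p SAT$. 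Since $f$ was an arbitrary object of $\NP$, every such object admits a polynomial reduction to $SAT$, so $SAT$ is a weak terminal object of $\NP$ and thus lies in $\NPComplete$.
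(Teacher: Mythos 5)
Your proposal is correct and follows essentially the same route as the paper: both reduce an arbitrary $\NP$ problem to $SAT$ by passing through the functor $L\colon \Turing \lra \Logic$ to build the formula $L(T)[x]$ encoding an accepting computation history, and both defer the genuinely technical step --- verifying that the reduction is computable in deterministic polynomial time --- to bookkeeping (the paper to standard references, you to a sketched variable count). Your version is in fact slightly more careful than the paper's outline, since you explicitly check that $SAT$ is itself an object of $\NP$ and note that the transition clauses must be made disjunctive to accommodate nondeterminism.
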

\begin{proof}
We shall only give the bare outline of the proof.
We have to show that for any $g\colon Seq \lra Bool$ in $\NP$ there is a polynomial reduction $h_g\colon Seq \lra String$ such 
\be \xymatrix{Seq\ar[rr]^{h_g}\ar[ddr]_{g}&& String \ar[ddl]^{SAT}\\ \\& Bool}\ee commutes. 
The one thing we know about every problem in $\NP$ is that, by definition, there is a computer that can execute the function.  
Consider the following sequence of functors
\be \xymatrix{\NP\ar@{^{(}->}[rr]^{Inc} && \TotCompFunc \ar[rr]&& \Turing \ar[rr]^L&&\Logic }
\ee
Call the composition of these functors $L'$ for ``logic''. $L(g)$ is a logical formula with variables that describes the workings of $g$. If $x$ is an input to $g$, then $g(x)$ is either true or false. We define $h_g(x)$ to be the logical formula $L'(g)$ with some of the variables set to the input values of the function. 
We will write this as $L'(g)[x]$. The point of the construction is that $g(x)$ is true if and only if $L'(g)[x]$ is true. The hard part of the proof is to show that $h_g$ is polynomial.  One can find the complete proof in Section 2.6 of \cite{garey}, Section 34.3 of \cite{corman}, Section 7.4 of \cite{sipser} 
\end{proof}

\subsection{Space Complexity}
Till now we have concentrated on the time resource. Let us give one of the main results about the space resource. When dealing with time complexity, the big open question is the relationship between $\PT$ and $\NP$. In contrast, the analogous question for space complexity is answered. 

First some preliminaries. The resources measured are in a deterministic Turing machine computation is the number of cells on the work tape used. For a nondeterministic Turing machine, we measure the number of cells used in an accepting computation. As we did with time complexity in Diagram \ref{diag:measurecomplexity}, we can formulate the functors 
\be \xymatrix{\TotCompFunc \ar[rr]^{\mu_{D,Space}}&&
Hom_\Set(\n,\r^*) }.\ee
and 
\be \xymatrix{\TotCompFunc \ar[rr]^{\mu_{N,Space}}&&
Hom_\Set(\n,\r^*) }.\ee
Since every cell used on a Turing tape demands a time click, we can show that for every $f\colon Seq \lra Seq'$ in $\TotCompFunc$ we have 
\be \mu_{D, Time}(f) \leq \mu_{D,Space}(f) \mbox{             and              } \mu_{N, Time}(f) \leq \mu_{N,Space}(f)\ee
Using these functors and the submonoid {\bf Poly} we can use pullbacks analogous to Diagram \ref{diag:pbcomplexity}
to form $\Poly_{D, Space}$ and $\Poly_{N,Space}$. These are subcategories of $\TotCompFunc$ that correspond to total computable functions that can be 
computed using a polynomial amount of space deterministicly and nondeterministicly respectively. Using diagrams analogous to Diagram 
\ref{diag:commacats}, we can form the category of decision problems $\PSPACE$ and $\NPSPACE$.  
\begin{theo}{\bf (Savitch's Theorem.)} 
The inclusion function $\PSPACE \longhookrightarrow \NPSPACE$ is actually the identity. That is, $\PSPACE=\NPSPACE$. 
\end{theo}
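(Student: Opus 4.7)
The plan is to prove the nontrivial inclusion $\NPSPACE \longhookrightarrow \PSPACE$; the other direction is already supplied by the induced inclusion $\Poly_{D,Space} \longhookrightarrow \Poly_{N,Space}$, since every deterministic Turing machine is a nondeterministic one with singleton option sets using the same work-tape cells. Let $f \colon Seq \lra Bool$ be any decision problem in $\NPSPACE$, witnessed by a nondeterministic Turing machine $N$ that uses at most $p(n)$ work-tape cells on every input of size $n$, for some polynomial $p \in \mathbf{Poly}$. I would construct an explicit deterministic Turing machine $D$ computing $f$ using at most $C \cdot p(n)^2$ cells, which places $f$ in $\Poly_{D,Space}$ (since $\mathbf{Poly}$ is closed under squaring) and hence, by the pullback definition analogous to Diagram \ref{diag:pbcomplexity}, in $\PSPACE$.

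The construction is Savitch's recursive reachability simulator. A configuration of $N$ on a fixed input $x$ is the tuple (current state, head positions, work-tape contents); since only $p(n)$ cells are ever visited, it fits in $\ell = O(p(n))$ bits and there are at most $2^\ell$ distinct configurations. After the usual preprocessing that makes the accepting configuration $c_{\mathrm{acc}}$ unique, $N$ accepts $x$ iff $c_{\mathrm{acc}}$ is reachable from the initial configuration $c_0$ in at most $2^\ell$ transitions. I would define a recursive Boolean procedure $\mathrm{REACH}(c_1, c_2, i)$ that returns true exactly when $c_2$ is reachable from $c_1$ in at most $2^i$ transitions: the base case $i = 0$ is decided in space $O(\ell)$ by direct inspection of $N$'s transition relation, and the recursive case iterates in lexicographic order over every candidate midpoint $c_m$, reusing the same workspace for each, and accepts as soon as both $\mathrm{REACH}(c_1, c_m, i-1)$ and $\mathrm{REACH}(c_m, c_2, i-1)$ succeed. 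The machine $D$ on input $x$ then returns $\mathrm{REACH}(c_0, c_{\mathrm{acc}}, \ell)$.

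The main obstacle, and the only nontrivial part of the argument, is the space accounting. Each live stack frame stores only $c_1, c_2, c_m, i$ and some bookkeeping, totalling $O(\ell)$ cells, and once a recursive call returns its workspace is overwritten by the caller. Because the step budget halves with each descent, the recursion depth is $\ell$, so at any instant at most $\ell$ frames of size $O(\ell)$ are simultaneously live. This yields a total workspace of $O(\ell^2) = O(p(n)^2)$, still polynomial in $n$. Auxiliary routines (computing $\ell$, encoding configurations, simulating a single transition of $N$) fit comfortably in the same budget. This gives essential surjectivity of the inclusion on objects; fullness and faithfulness are automatic because the morphisms of both $\PSPACE$ and $\NPSPACE$ are the same polynomial-time reductions inherited from $\TotCompFunc$, so the inclusion is an equality of categories.
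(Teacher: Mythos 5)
Your proposal is correct and follows the same route as the paper: the paper's proof consists precisely of the observation that a nondeterministic $f(n)$-space computation can be simulated deterministically in $f(n)^2$ space (deferring the details to the standard references), and your recursive reachability procedure with its $O(\ell)$-per-frame, depth-$\ell$ space accounting is exactly that simulation, spelled out. The only cosmetic remark is that the final step is literal surjectivity on objects (the two comma categories have the same objects and the same polynomial reductions as morphisms), not merely essential surjectivity, but your argument establishes that.
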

\begin{proof}
The proof basically shows that every nondeterminstic computation that uses $f(n)$ spaces can be mimicked by a deterministic computation that uses $f(n)^2$ spaces. In particular, if $f(n)$ is a polynomial, then $f(n)^2$ is also a polynomial. We conclude that all the nondeterministic computable decision problems in $\NPSPACE$ are also in $\PSPACE$. The details of the proof can be found in Section 8.1 of \cite{sipser}, Section 7.3 of \cite{pap} and Section 4.2 of \cite{barak}. 
\end{proof}
\section{Kolmogorov Complexity Theory}
In this area of theoretical computer science we measure the informational content of strings. We say the Kolmogorov complexity of string $w$ is the size of the smallest Turing machine that can produce $w$. The idea is that the string is a simple string then a small Turing machine can produce the string. In contrast, if the string is more complicated and has more informational content, then the Turing machines needs to be more complicated. What if the string is so complicated, that there are no small Turing machines that can produce it? 

First some motivating examples. Consider the following three strings:
\begin{enumerate}
\item 00000000000000000000000000000000000000000000000
\item 11011101111101111111011111111111011111111111110
\item 01010010110110101011011101111001100000111111010
\end{enumerate}

All three consists of 0's and 1's and are of length 45. It should be noted that if you 
flipped a coin
45 times the chances of getting any of these three sequences are equal. That is, the chances for
each of the strings occurring is $\f{1}{2^{45}}$. This demonstrates a fault of classical probability theory in
measuring the informational content of a string. Whereas you would not be shocked to see a sequence of coins
produce string 3, the other two strings would be surprising. A better way of measuring the informational content is to look at the shortest programs that describe these strings:

\begin{enumerate}
\item Print 45 0's.
\item Print the first 6 primes.
\item Print `01010010110110101011011101111001100000111111010'.
\end{enumerate}
The shorter the program, the less informational content of the string and the string is ``compressible.'' In contrast, if only a long
program can describe the string, then the string has more content. If the only way to have a computer formulate the string is to literally have the string in the program then the string is  ``incompressible.'' An incompressible string is also called ``random'' because it has no patterns that we can use to print it out.

We should note that Kolmogorov complexity theory is not the only way to measure strings. There is computational complexity (how many steps does it take for the Turing machine to print the string), logical depth \cite{Bennett}, sophistication \cite{Koppel} and others. 

Let us use the categories from The Big Picture. We only need to look at the functor $D \colon \TotTuring \lra \TotCompFunc$. But even this is too complicated. Let us look at the restriction functor $D|$ that we get from the following pullback:
\be \xymatrix{
\TotTuring(1,1)\ar@{->>}[dd]_{D|}\ar@{^{(}->}[rr]&&\TotTuring\ar@{->>}[dd]^D\\ 
\\
\TotCompFunc(Str,Str)\ar@{^{(}->}[rr]&&\TotCompFunc. 
}\ee
That is, $D|$ is the functor (actually it is a set function) from the set of all total Turing machines with one input tape and one output tape to the set of computable functions that accept a string and output a string. There is a size functor $Sz\colon \TotTuring(1,1) \lra d(\n)$ that assigns to every total Turing machine the number of rules in the Turing machine. 

\begin{defi}
Let $x$ and $y$ be strings. Then we define the {\bf relative Kolmogorov complexity} to be the size of the smallest Turing machine that mimics a computable function that for the input $y$, outputs $x$. In symbols.
\be K(x|y)=\min_{\substack{T\in D|^{-1}(f\colon String \lra String) \\ f(y)=x}}Sz(T)\ee
If $y$ is the empty string, then $K(x)=K(x|\empty)$ is the {\bf Kolmogorov complexity} of $x$. This is the size of the smallest Turing machine that starts with an empty tape and outputs $x$.  
\end{defi}
If $K(x)\leq |x|+c$ for come constant $c$ then $x$ is {\bf compressible}  Otherwise $x$ is {\bf incompressible} and {\bf random}.  

We end this short visit into Kolmogorov complexity theory with the main theorem about $K$. 
One might believe that $K$ can be computed and we can find the exact amount of minimal structure each string contains. Wrong.  
\begin{theo}K is not a computable function.
\end{theo}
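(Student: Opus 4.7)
The plan is to derive a contradiction via the Berry-paradox style argument: if $K$ were computable, we could build a small Turing machine that produces a string whose Kolmogorov complexity we have just forced to be large, contradicting the very output $K$ gave us.

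First I would assume, for contradiction, that $K \colon String \lra \n$ is in $\TotCompFunc$, so that there is a total Turing machine $T_K$ computing it. Fix some canonical (say lexicographic) enumeration of strings. For each natural number $n$, I would then build a Turing machine $M_n$ which starts on an empty tape and does the following: it generates strings $x$ in lexicographic order, uses $T_K$ as a subroutine to compute $K(x)$, and halts printing the first $x$ with $K(x) > n$. Since $K$ is known (from a counting argument) to be unbounded, such an $x$ always exists, so $M_n$ halts and outputs some string $x_n$ with $K(x_n) > n$ by construction.

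Next I would bound the size of $M_n$. The code of $M_n$ is the composition of: (i) a fixed enumerator for strings, (ii) the fixed machine $T_K$, (iii) a fixed comparator that checks $K(x) > n$, and (iv) the numeral $n$ itself. All of (i)--(iii) contribute a constant $c$ (independent of $n$) to $Sz(M_n)$, while hard-coding $n$ into the rules contributes $O(\log n)$. Hence there is a constant $c$ with
\be Sz(M_n) \leq \log_2 n + c. \ee
Since $M_n \in D|^{-1}(f)$ for some $f\colon String \lra String$ with $f(\emptyset) = x_n$, the definition of Kolmogorov complexity gives $K(x_n) \leq Sz(M_n) \leq \log_2 n + c$. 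Combined with $K(x_n) > n$, this yields $n < \log_2 n + c$, which fails for all sufficiently large $n$. This contradicts the assumption that $K$ is computable.

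The main obstacle is step (iv): verifying that $n$ can indeed be encoded into the rule set of a Turing machine using only $O(\log n)$ rules (as opposed to naively writing $n$ in unary, which would give $n$ rules and destroy the argument). This is the self-referential punchline analogous to Turing's halting proof: the machine that searches for an incompressible string is itself very compressible, because it only needs the description of $n$, not of the target string. Once this logarithmic-size encoding of the parameter $n$ into the program is granted, the contradiction $n < \log_2 n + c$ is immediate, and the remaining book-keeping about composing $T_K$ with the enumerator and comparator is routine given Technical Point \ref{tech:TuringNotCat} and the categorical setup of $D|$ already established.
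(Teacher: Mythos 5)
Your proposal is correct and follows essentially the same Berry-paradox argument as the paper: assume $K$ is computable, search lexicographically for the first string of complexity exceeding $n$, and derive the contradiction $n < \log n + c$ from the logarithmic cost of hard-wiring $n$. The only addition is your explicit remark that $K$ is unbounded (so the search terminates), a point the paper leaves implicit.
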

\begin{proof}
The proof is a proof by contradiction. Assume (wrongly) that $K\colon String \lra Nat$ is a computable function. We will use this function to show a contradiction. If $K$ is computable, then we can use $K$ to compute a computable function $K'\colon Nat \lra String$. $K'$ works as follows:
\begin{enumerate}
\item Accept an integer $n$ as input. 
\item Go through every string $s \in \Sigma^*$ in lexicographical order
\begin{enumerate}
\item Calculate $K(s)$.
\item If $K(s)\leq n$, continue. 
\item If $K(s)> n$, output $S$ and stop. 
\end{enumerate}
\end{enumerate}

For any $n$ this program will output a string with a larger Kolmogorov complexity than $n$.
This program has a size, say $c$. If we “hard-wire” a number $n$ into the program, this would
demand $\log n$ bits and the entire program will be of size $\log n + c$. Hard-wiring means making a computable function $K''\colon \ast \lra Nat \lra String$ where $\ast \lra Nat$ picks out $n$. 
$K''$ will output a string that demands more complexity than $n$. Since we
can find an $n$ such that $n > \log n + c$, $K''$ will produce a string that has higher Kolmogorov complexity then the size of the Turing machine that produced it. This is a contradiction. Our assumption that $K$ is computable is false. 
\end{proof}
This means we can never get a computer to tell us if there is a string has more structure than what we see. 

\section{Algorithms}
We close this mini-course with a short discussion of the definition of ``algorithm''. Notice that this word has not been mentioned so far. While we freely used the words ``function,'' ``program,'' ``Turing machine,'' we did not use ``algorithm.'' This is because the formal definition of the word is not simple to describe.

There are those who say that an algorithm is exactly the same thing as a program. In fact, on page 5 of the authoritative \cite{corman}, an algorithm is informally defined as ``any well-defined computational procedure that takes some value, or set of values, as input and produces some value, or set of values, as output.'' We are left with asking what is a  ``procedure''? Furthermore, this informal definition seems like it is defining a program not an algorithm.  

The problem with equating algorithms with programs is that is not the way the word ``algorithms'' is used colloquially. 
If there are two programs that are very similar and only have minor differences, we usually do not consider them different algorithms. We say that the algorithm is the same but the programs are different. Here are some examples of differences in programs that we still consider to be the same algorithm:
\begin{itemize}
\item One program uses variable name $x$ for a certain value, while the other program uses variable name $y$ for the same value.
\item One program performs a process $n$ times in a loop, while another program performs the process in a loop $n-1$ times and then does the process one more time outside the loop.  
\item One program performs two unrelated processes (they do not effect each other) in one loop, while a second program performs each of the two unrelated processes in their own separate loop.  
\item One program performs two unrelated processes in one order while a second program performs the unrelated processes in the reverse order.  
\end{itemize}
This list can easily be extended. 

Let us make the case in another way. A teacher describes a certain algorithm to her computer class. She tells her thirty students to go home and implement the algorithm. Assuming that they are all bright and that there is no cheating, thirty {\it different} programs will be handed in the next class. Each program is an implementation of the algorithm. While there are differences among the programs, they are all ``essentially the same.'' All the programs definitely implement the same function. 
This is the way that the word ``algorithm'' is used.

With this is mind, we make the following definition.
\begin{defi} Take the set of all programs. We shall describe an equivalence relation on this set where two programs are equivalent if they are ``essentially the same.'' An {\bf algorithm} is an equivalence class of programs under this relation. Note that all the programs in the same equivalence class perform the same computable function. However there can be two different  equivalence classes that also perform the same function. 
\end{defi} 

Figure \ref{pic:defofanalg} on page \pageref{pic:defofanalg} makes this all clear. There are three levels. The top level is the collection of all programs. The bottom level is the collection of all computable functions. And in-between them is the collection of algorithms. Consider programs $mergsort_a$ and $mergesort_b$. The first program is an implementation of mergesort programed by Alice while the second program is written by Bob. They are both implementations of the algorithm $mergesort$ found in the middle level. There are also programs $quicksort_x$ and $quicksort_y$ that are different implementations of the algorithm $quicksort$. Both the algorithms $mergesort$ and $quicksort$ perform the same computable function $sort$. The big circle above the cone represented by $sort$ contains all the programs that implement the sort function. Above the computable function $find~~max$ there are all the programs that take a list and find the maximum element. Some of those programs are essentially the same and are implementations of the $binarysearch$ algorithm while others implement the $brute~~search$ algorithm. 

\begin{figure}[!ht]
\centering
 \includegraphics[width=15cm, height=12cm]{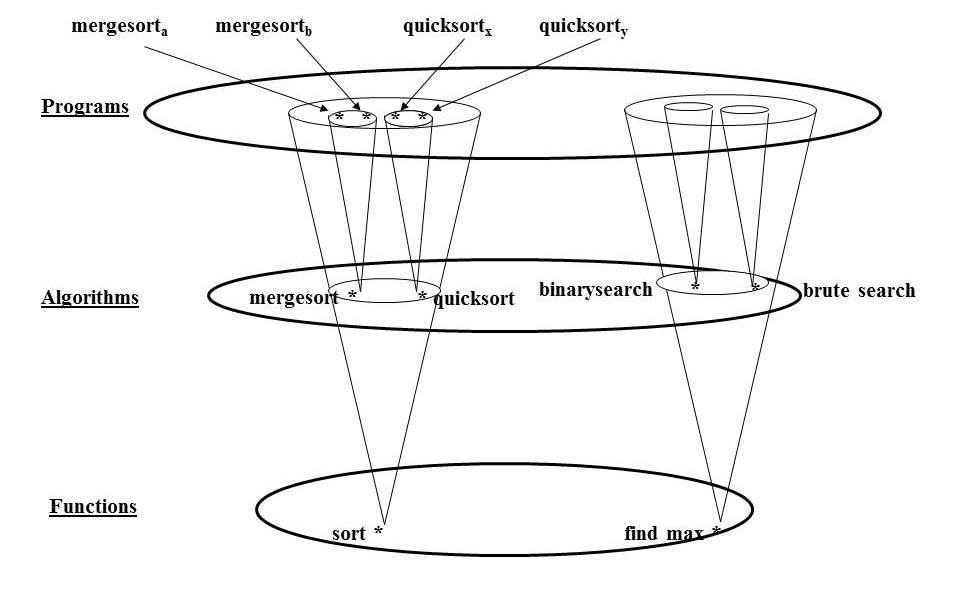}
  \caption{The Definition of an Algorithm} \label{pic:defofanalg}
\end{figure}

In terms of categories, this idea describes two full (symmetric monoidal) functors of (symmetric monoidal) categories. 
\be \xymatrix{
\Program\ar@{->>}[rr]&&
\Algorithm\ar@{->>}[rr]&&
\CompFunc 
}\ee
All the categories have sequences of types as objects. ($\Program$ is equivalent to $\Turing$ , $\RegMachine$ and $\CircuitFam$. They all fail to be true categories for the same reason as discussed in Technical Point \ref{tech:TuringNotCat}.) The left functor takes every program to the algorithm it implements. This functor is the identity on objects and full on morphism. The right functor takes every algorithm to the computable function that it describes. It too is the identity on objects and full on morphisms. 

There is something subjective about the question ``when are two programs considered ``essentially the same?'' Each answer will give us different categories of algorithms. See \cite{mygalois} for more about this. 

The top level, programs, is the domain of programmers. The bottom level, computable functions, is really what theoretical computer scientists study. And the middle level is the core of  computer science. The categories $\Program$, $\Turing$, $\RegMachine$ and $\CircuitFam$ are syntactical in the sense that you can write them down exactly. In contrast, $\CompFunc$ and the other categories in the center of The Big Picture are semantical. They are the  meaning of the relationship between inputs and outputs. Algorithms are somewhat in-between syntax and semantics. They are ``an idea'' of the method of going from input to output.  

Defining an object as an equivalence class of more concrete objects is not unusual. (i) Some philosophers follow Gottlob Frege in defining natural numbers as equivalence classes of finite sets that are bijective to each other. In detail, take the set of finite sets and put an equivalence relation: two sets are deemed equivalent if there exists a bijection between them. Every equivalence class corresponds to a natural number.  (As category theorists, we say that the natural numbers is the skeletal category of finite sets.) The number 3 is ``implemented'' by all the sets with three elements. (ii) Mathematicians describe a rational number as an equivalence class of pairs of integers. In detail, the pair $(x,y)$ is equivalent to $(x',y')$ if and only if $xy'=yx'$. The fraction $\f{1}{3}$ is ``implemented'' by the pairs $(1,3)$, $(10,30)$, $(-30,-90)$, $(534, 1602)$, etc. (iii) Physicists do not study physical phenomena. Rather they study collections of phenomena. That is, they look at all phenomena and declare two phenomena to be equivalent if there is some type of symmetry between them. Two experiments occur in different places, or are oriented differently, or occur at different times are considered the same if their outcome is the same. Laws of nature describe collections of physical phenomena, not individual phenomena. See \cite{mywhymath} for more about this and the relationship between collections of phenomena and mathematics.  

\section{Further Reading}
We have only scratched the surface. Theoretical computer science is an immense subject. We can only point the way for the reader to learn more. For a popular, non-technical introduction to much of this see Chapters 5 and 6 of my \cite{myolr} and David Harel's \cite{harel}.  
\begin{itemize}
\item Models of Computation: Every book in theoretical computer science has their favorite model of computation. Many use Turing machines for historical reasons. Sipser \cite{sipser}, Lewis and Papadimitriou \cite{lewis}, and Boolos, Burgess and Jeffrey \cite{boolos} all use Turing machines.
 There is more about the development of the Turing machine idea in Andrew Hodges' excellent biography of Alan Turing \cite{hodges}.   Register machines can be found in \cite{cutland, davis,rogers}. 
\item Computability Theory: There are many excellent books in this area, e.g., \cite{sipser, cutland, davis}. Much can be learned about oracle computation and the whole hierarchy of unsolvable problems in \cite{soare} and in \cite{rogers}. 
\item Complexity Theory: Some textbooks are \cite{lewis, papa, barak} and Chapter 7-10 of \cite{sipser}. There is much about NP-Compete problems in \cite{garey}.
\item Kolmogorov Complexity Theory: The main textbook in this field is \cite{li}. Christian S. Calude's book \cite{caludeinformation} is wonderful. There is also a short, beautiful introduction to the whole field in Section 6.4 of \cite{sipser}. 
One of the founders of this field is Gregory J. Chaitin. All his books and papers are interesting and worth studying.
\end{itemize}

None of the above sources mention any category theory. Our presentation is novel in that these topics have not been presented before in a uniform way using categories. 

The idea of defining an algorithm as equivalence class of programs comes from my paper \cite{mydefalg}. There is a followup to the paper which deals with many different equivalence classes \cite{mygalois}. The first paper was criticized by Andreas Blass, Nachum Dershowitz,  and Yuri Gurevich \cite{blass}. My definition of an algorithm is used in the second edition of Manin's logic book \cite{manin} and by several others since.

This mini-course uses category theory as a bookkeeping tool to store and compare all the various parts of theoretical computer science. There is, however,  a branch of research that uses category theory in a deeper way. They describe properties of categories that would be able to deal with computations. Perhaps the first work in this direction was done by one of the founders of category theory, Sammy Eilenberg. Towards the end of his career, in 1970,  he and Calvin C. Elgot published a small book titled ``Recursiveness'' \cite{EilenbergS:rec}. In 1974 and 1976 he published a giant two-volume work on formal language theory titled {\it Automata, languages, and machines} \cite{sammya, sammyb}. Giuseppe Longo and  Eugenio Moggi also had several papers in this direction \cite{longo84, longo84a, longo90}. 
In 1987, Alex Heller (who was my thesis advisor and a dear friend) and Robert DiPaola (who was a teacher of mine and a close friend) published a paper ``Dominical categories: recursion theory without elements''\cite{hellerD}.  This work was followed by papers of Heller \cite{Heller} and Florian Lengyel \cite{florian}. There are various types of similar categories with names like ``P-Categories'', ``Restriction categories'',`` Recursion categories'', and  ``Turing Categories''.  See Robin Cockett and Pieter Hofstra's paper \cite{cockett} for a clear history of the development of these ideas.
Dusko Pavlovic develops the notion of a computation in a monoidal category in series of papers that start here \cite{dusko}. 
There is also a development of such ideas for complexity theory in ``Otto's thesis'' \cite{otto} and in paper by Ximo Diaz-Boils \cite{ximo}.

Yuri Manin's paper \cite{ManinCC} and his subsequent book \cite{manin} put all computations into one category called a ``computational universe''. This is similar to what is done in this mini-course.  He was also able to incorporate quantum computing into his categories. 

Another connection between theoretical computer science and category theory is implementing categorical structures on computers. The first place to look for this is in Rydeheard and Burstal's  {\it Computational Category Theory} \cite{rydeheard}

It is worth mentioning yet another connection between theoretical computer science and category theory. I wrote a paper \cite{mycomplexity} which shows that there are constructions in category theory that can mimic the workings of a Turing machine. Since limits and colimits are infinitary operations, it is possible for categories to ``solve'' the Halting problem. (But this solution cannot be implemented on a finite computer.)

\newpage

\markboth{Bibliography}{Bibliography }

\bibliography{TCSfWCTbib}
\bibliographystyle{plain}

\end{document}